\providecommand{\Junyu}[1]{{\protect\color{brown}\noindent {\bf[Junyu]} \emph{#1} {\bf [/Junyu]}}}
\newcommand{\edge}[1]{\ensuremath{\langle #1 \rangle}}
\newcommand{\myCompilerName}{MarQSim}
\newcommand{\myCompilerNameSpace}{MarQSim }
\newtheorem{theorem}{Theorem}[section]
\newtheorem{corollary}{Corollary}[section]
\newtheorem{remark}{Remark}[section]
\newtheorem{definition}{Definition}[section]
\newtheorem{proposition}{Proposition}[section]
\newtheorem{example}{Example}[section]
\begin{document}

\title{MarQSim: Reconciling Determinism and Randomness in
Compiler Optimization for Quantum Simulation}


\author{Xiuqi Cao}
\affiliation{%
  \institution{University of Pennsylvania}
  \city{Philadelphia}
  \country{USA}}
\email{}

\author{Junyu Zhou}
\email{junyuzh@seas.upenn.edu}
\affiliation{%
  \institution{University of Pennsylvania}
  \city{Philadelphia}
  \country{USA}}

\author{Yuhao Liu}
\affiliation{%
  \institution{University of Pennsylvania}
  \city{Philadelphia}
  \country{USA}}
\email{liuyuhao@seas.upenn.edu}

\author{YUnong Shi}
\affiliation{%
 \institution{AWS Quantum Technology}
 \city{New York}
 \country{USA}}

\author{Gushu Li}
\affiliation{%
  \institution{University of Pennsylvania}
  \city{Philadelphia}
  \country{USA}}
\email{gushuli@seas.upenn.edu}


\begin{abstract}
  Quantum simulation, fundamental in quantum algorithm design, extends far beyond its foundational roots, powering diverse quantum computing applications. However, optimizing the compilation of quantum Hamiltonian simulation poses significant challenges. Existing approaches fall short in reconciling deterministic and randomized compilation, lack appropriate intermediate representations, and struggle to guarantee correctness. Addressing these challenges, we present \myCompilerName, a novel compilation framework. \myCompilerNameSpace leverages a Markov chain-based approach, encapsulated in the Hamiltonian Term Transition Graph, adeptly reconciling deterministic and randomized compilation benefits. We rigorously prove its algorithmic efficiency and correctness criteria. Furthermore, we formulate a Min-Cost Flow model that can tune transition matrices to enforce correctness while accommodating various optimization objectives. Experimental results demonstrate \myCompilerName's superiority in generating more efficient quantum circuits for simulating various quantum Hamiltonians while maintaining precision.
\end{abstract}

\maketitle

\section{Introduction}

One of the most fundamental principles in quantum algorithm design is the concept of quantum simulation, often referred to as Hamiltonian simulation. Inspired initially by Feynman's visionary proposal for developing a quantum computer~\cite{feynman1982simulating}, this notion represents a pivotal application of quantum computing~\cite{lloyd1996universal, abrams1999quantum}.
Quantum simulation involves emulating quantum physical systems, a task of paramount importance that extends beyond simulation itself. Quantum algorithms have since harnessed the power of simulation to address an array of applications, such as solving linear system~\cite{harrow2009quantum}, quantum principal component analysis~\cite{lloyd2014quantum}, and quantum support vector machine~\cite{rebentrost2014quantum}. These algorithms necessitate the simulation of quantum systems carefully crafted to address specific computational problems, 
Notably, alongside digital quantum computing, the analog quantum simulation becomes a promising candidate to demonstrate quantum advantage since it provides excellent efficiency by directly mapping the simulated Hamiltonian onto the device Hamiltonian without the quantum circuit abstraction~\cite{peng2023simuq, daley2022practical}.


The quantum simulation process is usually described by the operator $e^{i\mathcal{H}t}$ where $\mathcal{H}$ is the Hamiltonian of the target quantum system and $t\in\mathbb{R}$ is the evolution time of the simulated system.
To implement this operator on a quantum computer, a common practice is to decompose the Hamiltonian into a weighted sum of some Hamiltonian terms $\mathcal{H} = \sum_jh_jH_j$ where $h_j \in \mathbb{R}$ is the weight, and $H_j$ is the Hamiltonian term. A typical selection of $H_j$ is the Pauli strings due to the simple implementation of $e^{i\theta_jH_j}$ ($\theta_j \in \mathbb{R}$ is a real parameter and $H_j$ is a Pauli string). 
Finally, implementing operator $e^{i\mathcal{H}t}$ is turned into combining the simulation of all these Hamiltonian terms $e^{i\theta_jH_j}$ and the compilation usually results in a very long quantum gate sequence.
Even for the analog quantum simulation without the quantum circuit abstraction, decomposing and manipulating the Hamiltonian terms are still required in its Hamiltonian compilation~\cite{peng2023simuq}.

Optimizing the compilation of quantum Hamiltonian simulation is imperative to harness the potential of quantum simulation fully.
However, the existing compilation approaches for quantum Hamiltonian simulation are still far from optimal, and we briefly summarize three major challenges.




\textbf{Seemingly Incompatible Compilation Approaches}:
The first challenge in compilation for quantum simulation is that one of the potential optimization opportunities comes from two seemingly incompatible approaches: deterministic compilation and random compilation.
After the Hamiltonian is decomposed into the summation of Hamiltonian terms, the deterministic compilation will find a specific order to execute the simulation of each term and repeat this order many times to approximate the simulation of the entire Hamiltonian. 
The Hamiltonian terms can order for various optimization objectives, such as more gate cancellation~\cite{gui2020term,li2022paulihedral,hastings2015improving}, shorter circuit depth~\cite{li2022paulihedral}, reduced approximation error~\cite{gui2020term}, circuit simplification~\cite{cowtan2019phase, cowtan2020generic, de2020architecture, van2020circuit}, etc.
In contrast, the randomized approach will randomly sort the Hamiltonian terms~\cite{childs2019faster,ouyang2020compilation} or even randomly sample the Hamiltonian terms to assemble the final simulation circuit~\cite{campbell2019random}. Theoretical analysis shows that the randomized compilation can converge asymptotically faster with lower error bounds, but the benefits from deterministic compilation are mostly lost.
Previous works lie in either one of these two categories, as these two types of approaches are seemingly incompatible. To the best of our knowledge, none of them could simultaneously leverage the benefits of both approaches.

\textbf{Lack of Proper Intermediate Representation}:
Modern classical compilers 
employ multiple intermediate representations (IRs) (e.g., control flow graph, static single assignment) from high-level to low-level to accommodate different optimizations. 
Today's quantum compilers~\cite{Qiskit,sivarajah2020t,amy2020staq,khammassi2020openql,mccaskey2021mlir}, on the other hand, are mostly built around low-level representations~\cite{cross2017open,smith2016practical,kissinger2019pyzx}. 
Recently, several new languages, including the Pauli IR in Paulihedral~\cite{li2022paulihedral}, the phase gadgets~\cite{cowtan2019phase} in TKET~\cite{sivarajah2020t}, and the Hamiltonian Modeling Language in SimuQ~\cite{peng2023simuq}, are proposed to help with compiling quantum Hamiltonian simulation.
These languages capture the individual Hamiltonian terms and thus can benefit program analysis and optimization in deterministic compilation approaches. 
Yet, to the best of our knowledge, none of them can accommodate optimizing or tuning in the randomized compilation approaches.


\textbf{Correctness Guarantee for Quantum Simulation Compilation}:
Generic quantum compilation optimizations, such as circuit rewriting~\cite{soeken2013white}, gate cancellation~\cite{nam2018automated}, template matching~\cite{maslov2008quantum}, and qubit mapping~\cite{murali2019noise},  usually execute very small-scale quantum program transformations in each step and thus can be checked by evaluating the gate matrices or formally verified (e.g., VOQC~\cite{hietala2023a}, Giallar~\cite{tao2022giallar}, VyZX~\cite{lehmann2022vyzx}).
In contrast, compiling the quantum simulation from the input Hamiltonian to the circuit implementation is a large-scale program transformation involving many gates and qubits.
Evaluating the correctness or approximation error requires calculating the exponentially large gate matrix of the entire Hamiltonian simulation circuits.
Previous deterministic compilation approaches are governed by the same trotterization product formula~\cite{trotter1959product}. As a result, they cannot break its approximation error bound.
The randomized compilation approaches~\cite{childs2019faster,ouyang2020compilation,campbell2019random}, on the other hand, rely on sophisticated ad-hoc algorithm performance analysis. 
The correctness guarantee for any new quantum simulation compilation involving randomization is non-trivial.


\begin{wrapfigure}{r}{0.5\textwidth}
    \begin{center}
        \includegraphics[width=0.5\textwidth]{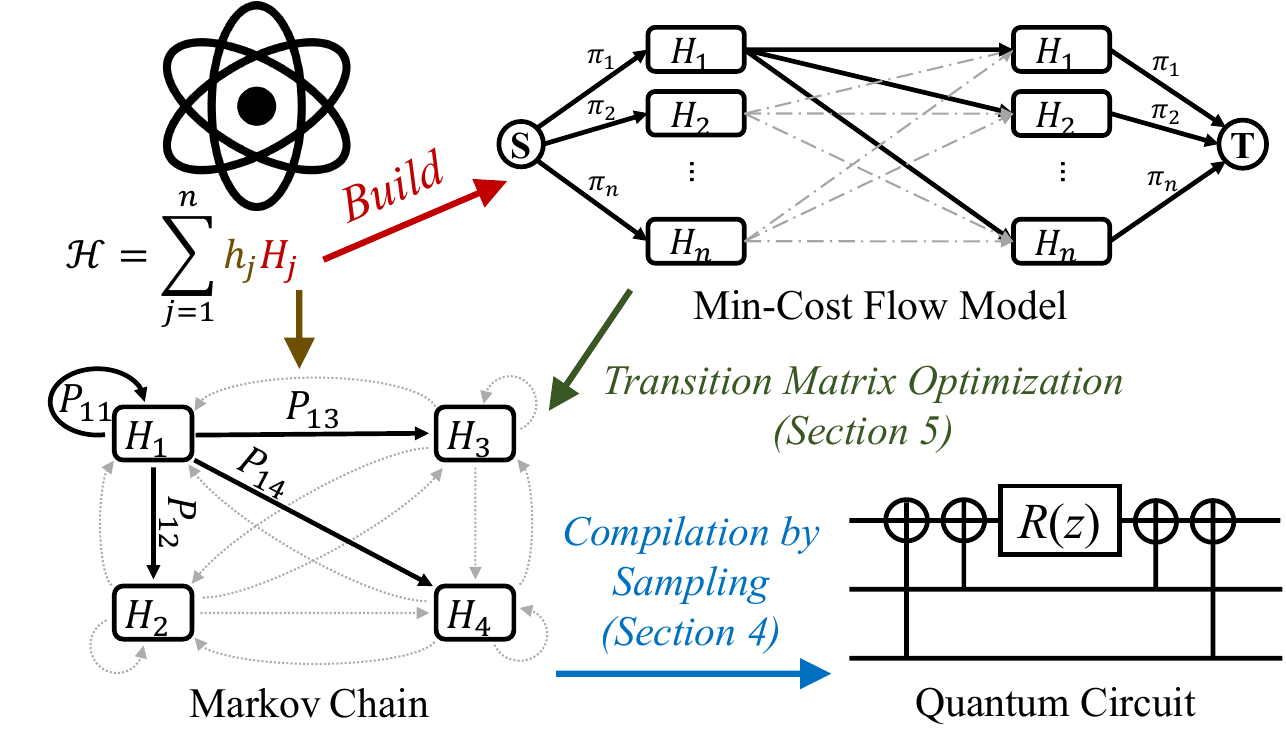}
    \end{center}
    \vspace{-10pt}
    \caption{Overview of \myCompilerNameSpace compiler framework}
    \label{fig:markqsim-overview}
\end{wrapfigure}
To this end, we propose \myCompilerName, a new framework that can advance the compilation for quantum Hamiltonian simulation programs by reconciling the deterministic and randomized compilation. An overview of \myCompilerNameSpace is illustrated in Fig.~\ref{fig:markqsim-overview}.
\textbf{First}, \myCompilerNameSpace formulates the compilation process into sampling from a Markov chain. Sampling from a random process can naturally inherit the benefits from randomized compilation. The probability transition matrix in the Markov chain can be tuned to change the sampling tendency and thus incorporate the benefits from deterministic compilation.
\textbf{Second}, \myCompilerNameSpace will convert an input Hamiltonian into a newly designed IR, the Hamiltonian Term Transition Graph, to encode the Hamiltonian information into a homogeneous Markov chain. The compilation process is to sample from a Markov chain whose state transition graph is our IR. As the key parameter, we rigorously prove the sufficient conditions for the Markov chain transition matrix to guarantee the overall correctness and algorithmic advantage of \myCompilerNameSpace compilation.
\textbf{Third}, in addition to the sufficient condition, we provide a practical approach to obtain a valid and optimized transition matrix by solving a Min-Cost Flow problem.
By carefully designing the flow architecture, the conditions for correctness and other optimization objectives can be intuitively encoded into the cost and capacity functions in the flow network. The solved flow can then be turned back into a transition matrix.
\textbf{Finally}, we show that different optimization objectives represented by multiple transition matrices can be easily reconciled in \myCompilerNameSpace by a normalized weighted summation of those individual transition matrices. 
We also propose to evaluate the convergence speed of the sampling process by analyzing the spectra of different transition matrices.
In this paper we demonstrate the combination of randomized compilation, CNOT gate cancellation, and random perturbation while extending to more optimization objectives is also possible.

The major contributions of this paper can be summarized as follows:
\begin{enumerate}
    \item We propose \myCompilerName, a new compilation framework that can reconcile the benefits from previous deterministic and randomized compilation techniques, even if they seem incompatible, and thus opens up new optimization space for quantum simulation.
    \item We designed a new IR, the Hamiltonian Term Transition Graph, for \myCompilerNameSpace to naturally formulate the compilation process into sampling from a Markov chain and a corresponding compilation algorithm. We rigorously proved the sufficient conditions to guarantee the correctness and the algorithmic advantage of \myCompilerNameSpace compilation.
    \item To further incorporate more optimization objectives from deterministic compilation, we provide a Min-Cost Flow model to turn the Markov transition matrix for different ordering tendencies while the correctness conditions are enforced.
    \item Experimental results show that \myCompilerNameSpace can outperform the baseline in compiling a wide range of quantum Hamiltonian simulations. The circuits generated by \myCompilerNameSpace contain significantly fewer gates when maintaining similar approximation accuracy.
\end{enumerate}

\section{Preliminary}

In this section, we introduce the necessary preliminary about quantum computing, quantum simulation, Markov process, and minimum-cost flow.

\subsection{Qubit and Quantum State}
The basic information processing unit in quantum computing is the qubit.
The state of a single qubit is defined by a vector $\vert \psi \rangle= \alpha \vert 0 \rangle + \beta \vert 1 \rangle$ with $|\alpha|^2 + |\beta|^2 = 1$ in the two-dimensional Hilbert space spanned by the $\vert 0 \rangle$ and $\vert 1 \rangle$ basis. This representation can be perceived as the qubit simultaneous being in $\vert 0 \rangle$ and $\vert 1 \rangle$ basis with amplitude $\alpha$ and $\beta$, known as the superposition. After a measurement, the superposition state collapses to $\vert 0 \rangle $ or $\vert 1 \rangle$, with probability $|\alpha|^2$ and $|\beta|^2$, respectively.


To extend this idea, the $N$-qubit state is defined by a vector in a $2^N$-dimensional Hilbert space. The basis of this Hilbert space could be constructed via the Kronecker product of $N$ 2-dimensional Hilbert space basis as $\ket{x_1x_2\dots x_N}=\ket{x_1}\otimes\ket{x_2}\otimes\dots\otimes\ket{x_N}$ where $\ket{x_i}\in\{\ket{0},\ket{1}\}$. Then, an $N$-qubit state can be a superposition among this basis. 
Once the basis is fixed,
the basis's superposition coefficients form the \textit{state vector} of a $N$-qubit state, which is also a vector in $2^N$-dimensional Hilbert space. 
The rest of this paper will always use this $\vert 0 \rangle$ and $\vert 1 \rangle$ basis. Also, we can construct a multi-qubits state by combining single-qubit states using the Kronecker product in this manner $\ket\psi= \ket\psi_1\otimes\ket\psi_2\otimes\dots\otimes\ket\psi_N$. However, it does not entirely hold reversely. For example, the EPR pair state $(\ket{00}+\ket{11})/\sqrt{2}$ can not be decomposed into the Kronecker product of any two single-qubit states. This phenomenon is also known as entanglement.

\subsection{Quantum Program and Circuit}
A quantum program (circuit) usually consists of a set of quantum gates (operators) and measurements. 
A quantum gate is a unitary matrix that applies to the state vector of the qubits.
For example, the $X$ gate is a single-qubit gate. Applying the $X$ gate to a single-qubit state $\alpha\ket{0}+\beta\ket{1}$ follows matrix multiplication:
\begin{equation*}
\footnotesize
    X = \mqty(0 & 1 \\ 1 & 0)\Rightarrow X\mqty(\alpha \\ \beta)=\mqty(0 & 1 \\ 1 & 0)\mqty(\alpha \\ \beta)=\mqty(\beta \\ \alpha)
\end{equation*}
Thus, gate $X$ transforms the state $\alpha\ket{0}+\beta\ket{1}$ to $\beta\ket{0}+\alpha\ket{1}$.
A Quantum gate applied on $N$ qubits can always be represented by a $2^N\times 2^N$ complex unitary matrix. For example, the two-qubit $CNOT$ gate has the following matrix representation:
\begin{equation*}\footnotesize
    CNOT=\begin{pmatrix}
    1 & 0 & 0 & 0 \\
    0 & 1 & 0 & 0 \\ 
    0 & 0 & 0 & 1\\ 
    0 & 0 & 1 & 0
    \end{pmatrix}
\end{equation*}


\begin{wrapfigure}{R}{0.35\textwidth}
    \centering
    \vspace{-5pt}
    \includegraphics[width=0.25\textwidth]{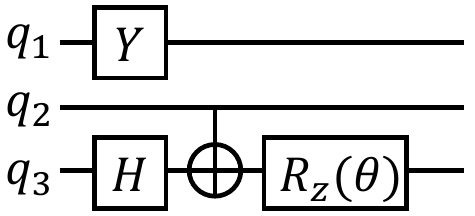}
    \vspace{-5pt}
    \caption{Quantum circuit example}
    \label{fig:quantum-program-representation}
\end{wrapfigure}
In this paper, the quantum programs are represented in the quantum circuit diagram. Fig.~\ref{fig:quantum-program-representation} shows an example of a 3-qubit circuit.
In the quantum circuit, each \textit{horizontal line} represents a qubit. Fig~\ref{fig:quantum-program-representation} has three qubits: $q_1$, $q_2$, and $q_3$. 
The \textit{square blocks and $\oplus$ symbols} on the horizontal lines represent gates applied on the qubits associated with the horizontal lines. Fig~\ref{fig:quantum-program-representation} has three single-qubit gates: one $Y$ gate on $q_1$, one $H$ gate on $q_3$, and a $R_z$ gate on $q_3$. Besides, a $CNOT$ gate is applied on $(q_2,q_3)$ between the $H$ gate and the $R_z$ gate.

\subsection{Quantum Simulation} \label{quantumsimulation}







Quantum Hamiltonian simulation involves implementing the gate $e^{i\mathcal{H}t}$ on a quantum computer where $\mathcal{H}$ is the Hamiltonian of the simulated system. 
To implement the gate $e^{i\mathcal{H}t}$ with basic single- and two-qubit gates, a common practice is to decompose the Hamiltonian into a weighted sum of an array of Hamiltonian terms $\mathcal{H} = \sum_jh_jH_j$ where $h_j \in \mathbb{R}$ is the weight and $H_j$ is the Hamiltonian term. A typical selection of $H_j$ is the Pauli strings due to the simple implementation of $e^{i\theta_jH_j}$ ($\theta_j \in \mathbb{R}$ is a real parameter). 




\begin{wrapfigure}{R}{0.4\textwidth}
    \centering
    \vspace{-5pt}
    \includegraphics[width=0.38\textwidth]{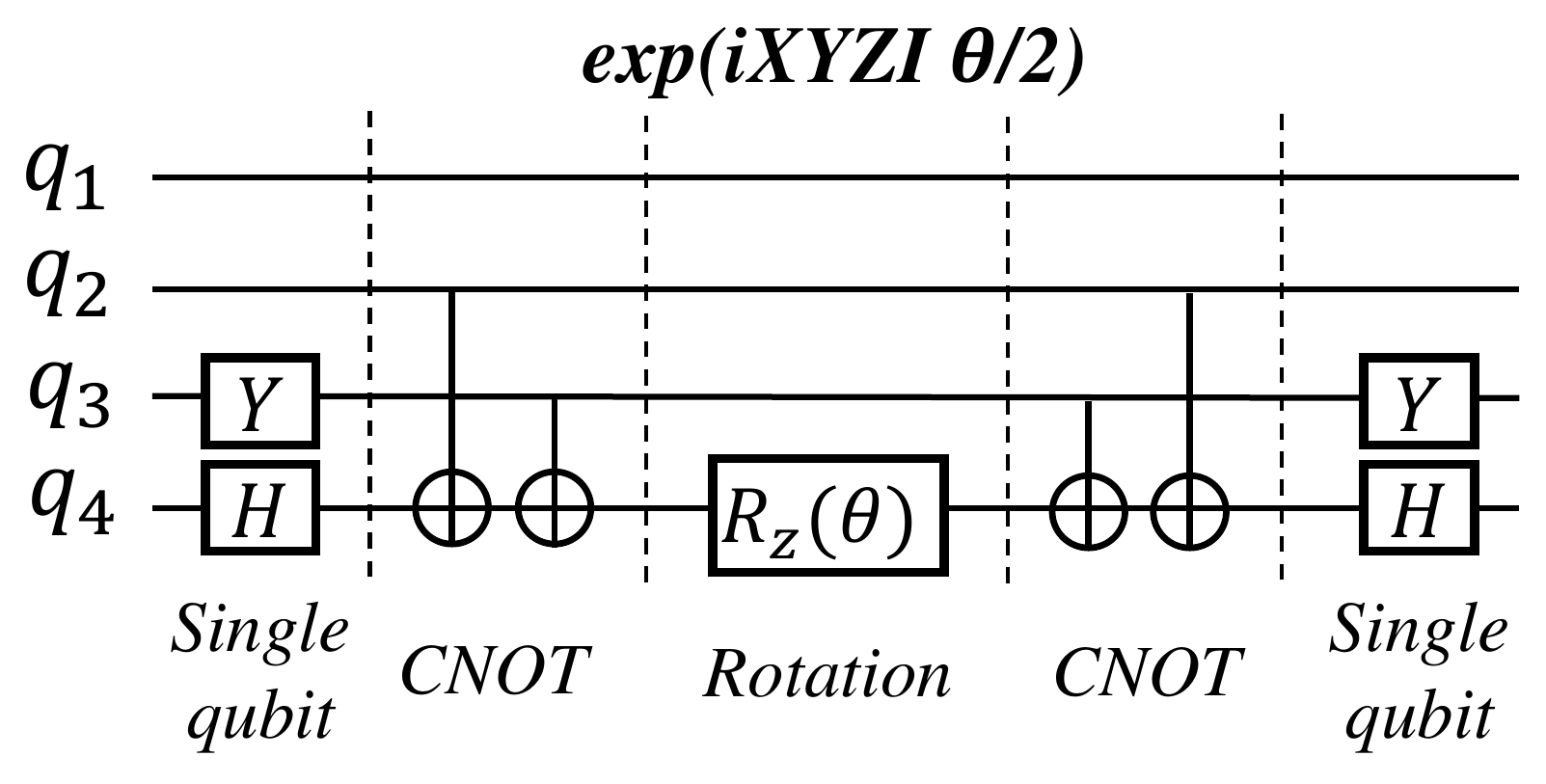}
    \vspace{-10pt}
    \caption{Decomposition of ${\rm exp}({\rm i}X_4Y_3Z_2I_1\frac{\theta}{2})$}
    \vspace{-5pt}
    \label{fig:PSsynthesis}
\end{wrapfigure}


\textbf{Pauli String}: For an $n$-qubit system, 
an $n$-qubit Pauli string is defined as $P = \sigma_{n} \otimes \sigma_{n-1} \otimes \cdots \otimes \sigma_{1}$ (or $\sigma_{n} \sigma_{n-1} \cdots \sigma_{1}$ in abbreviation), where $\sigma_i \in \{I, X, Y, Z\}$ for $i \in \{1,2,\ldots,n\}$. 
Here, $\otimes$ represents the Kronecker product, $X, Y$, and $Z$ are Pauli operators, and $I$ denotes the identity operator (see Appendix~\ref{sec:pauli-matrix}). The operator $\sigma_i$ acts on the $i$-th qubit.

\textbf{From Pauli Strings to Circuit}:
A critical property of a Pauli string $P$ is that for any $\theta \in \mathbb{R}$ the operator ${\rm exp}({\rm i}P\frac{\theta}{2})$ can be decomposed into basic gates easily. 
We introduce the decomposition process using an example shown in Fig.~\ref{fig:PSsynthesis}, which illustrates the synthesis of ${\rm exp}({\rm i}X_4Y_3Z_2I_1\frac{\theta}{2})$. First, two identical layers of single-qubit gates are added at the beginning and the end of the circuit snippet. In these two layers, the $\rm H$ or $\rm Y$ gates are applied on the qubits whose corresponding operators in the Pauli string are $\rm X$ (i.e., $q_4$) or $\rm Y$ (i.e., $q_3$), respectively.
Next, one qubit whose corresponding operator in the Pauli string is not the identity is selected as the root qubit.
Then, a group of CNOT gates are applied between all other qubits whose corresponding operator is not the identity and this root qubits. 
In this example, $q_4$ is the root qubit. There are CNOT gates on $(q_2, q_4)$ and $(q_3, q_4)$.
Then, the $\rm R_z(\theta)$ gate is applied on the root qubit.
Another group of CNOT gates is applied after the $\rm R_z(\theta)$ gate. The CNOT gates are applied on the same qubit pairs as the first group, but the order is reversed.
Note that there are other ways to generate the CNOT gate. The approach in Fig.~\ref{fig:PSsynthesis} is selected to benefit from CNOT gate cancellation optimization introduced in~\cite{gui2020term}.

After the $e^{i\theta_jH_j}$'s are implemented with basic gates, a compiler will further assemble them to realize $e^{i\mathcal{H}t}$. We will introduce various options for this step later in Section~\ref{sec:previouswork}.

\subsection{Markov Stochastic Process}
\label{subsec:markov}

Our compiler optimization technique is based on formulating the instruction scheduling and code synthesis into a Markov stochastic process.
A discrete stochastic process $X$ is defined as an array of random variables  $ \{ X_t \mid t \in T \}$ indexed by $t\in T = \{ 0, 1, 2, \ldots \}$. 
Suppose a finite set $S = \{s_1, s_2, s_3, \dots, s_{|S|}\}$ is the state space.
Each random variable $X_t$ in the set $\{ X_0, X_1, X_2, \ldots \}$ takes values in a finite set $S = \{s_1, s_2, s_3, \dots, s_{|S|}\}$ known as the state space. 
\begin{definition}[Markov Chain]
    A stochastic process $X$ is a Markov chain if it satisfies the following condition:
$$\forall n \ge 1,\ \forall s,x_0,x_1,\ldots,x_{n-1}\in S, \
 \Pr[X_n = s | X_{n-1} = x_{n-1}]
= \Pr[X_n = s | X_0 = x_0, \ldots, X_{n-1} = x_{n-1}].$$
The probability of each random variable in a Markov chain is determined solely by the state observed in the preceding one.
\end{definition}






\begin{definition}[Homogeneous Chain]
   A Markov chain is a homogeneous chain if it satisfies the condition:
$$\forall n \in T,\ \forall s_i, s_j\in S, \
  \Pr[X_{n+1}=s_j | X_n=s_i]=\Pr[X_1=s_j|X_0=s_i].$$
This means the probability of transitioning from one state to another is consistent across all steps.
\end{definition}
In the rest of this paper, all Markov chains discussed are homogeneous.




\begin{definition}[Transition Matrix]\label{transitionmatrix}
    The transition matrix $\mathbf{P}=(p_{ij})$ of a homogeneous Markov chain can be defined by the transition probabilities $p_{ij}=\Pr[X_{n+1}=s_j|X_n=s_i]$ to represent the probabilities of changing among all states.
It is easy to prove that for all $i,j \in \left \{1,2,..., |S|\right \} $,  $0 \le p_{ij}\le 1$ and $\sum_{j}p_{ij}=1$.
Naturally, the matrix $\mathbf{P}^{(n)}$ (the $n$-th power of matrix $\mathbf{P}$) where $n \in \left \{1,2,3,... \right \}$ represents the probability of changing from one state to another after exact $n$ steps.
Thus, the element of the matrix $p_{ij}^{(n)}$ is defined as $p_{ij}^{(n)} = \Pr[X_n=s_j|X_0=s_i]$.
\end{definition}

\begin{definition}[State Transition Graph]
       A directed graph $G(V, E)$ can be constructed to represent a finite-state homogeneous Markov chain with a transition matrix $\mathbf{P}$. 
The vertex $i$ in the graph represents a state $s_i$  for all $1\leq i \leq |S|$ and there are $|S|$ nodes. A directed edge from state $s_i$ to $s_j$ exists with weight $p_{ij}$ if $p_{ij} > 0$. Self-edges are allowed.
\end{definition}
\begin{definition}[Recurrent Class]
 
For two vertices $i$ and $j$ (states $s_i$ and $s_j$), if there exists a path from $i$ to $j$ in the constructed graph, we say that state $s_i$ communicates with state $s_j$, denoted as $i\rightarrow j$.
We use the notation $i \leftrightarrow j$ if states $i$ and $j$ can communicate with each other (i.e., $i\rightarrow j$ and $j\rightarrow i$). 
Note that ``$\leftrightarrow$'' signifies an equivalence relationship, which is reflexive, symmetric, and transitive~\cite{StochasticProcess}.  Consequently, we can categorize all the recurrent states into recurrence classes $R_1, R_2,...$. For any two states $i$ and $j$ in the same recurrence class, there exists a finite integer $n \in \{0,1,2,\ldots \}$ such that the probability of transitioning from state $i$ to state $j$ after exactly $n$ steps is non-zero. 
\end{definition}

\begin{definition}[Stationary Distribution]
\label{def:stationarydist}

Let  $\pi$ denote a probability distribution of all states of a Markov chain, where $\pi_i$ denotes the probability of being in state $s_i$. 
$\pi$ is a \textit{stationary distribution} if, provided $\Pr[X_0 = s_i] = \pi_i$ for all $i \in \{0, 1, 2, \ldots\}$ at $t=0$, the condition $\Pr[X_t = s_i] = \pi_i$ is satisfied for all time points $t \in \{1, 2, 3, \ldots\}$ and states $i \in \{0, 1, 2, \ldots\}$.
It can be established that the distribution $\pi$ is a stationary distribution if and only if it fulfills the subsequent conditions:
\begin{enumerate}
    \item $\sum_i \pi_i =1$. A stationary distribution should be normalized.
    \item $ \forall i, \pi_i = \sum_{k} p_{ki}\pi_k \ge 0$. A stationary distribution is not changed after one one-step state transition.
\end{enumerate}
\end{definition}
\begin{remark}[Properties of Stationary Distribution]
    Certain properties~\cite{norris1998markov} of stationary distributions, which are instrumental in the proposed compiler optimization, are listed as follows: 
\begin{enumerate}
    \item A stationary distribution exists and is unique if there is only a single recurrence class. 

    \item If the chain possesses a unique recurrence class, then for $\pi$, the corresponding unique stationary distribution, the mixing property $\lim_{t \to \infty} p_{ij}^{(t)} = {\pi}_j$ holds for all $i$ and $j$. As the number of steps $t$ approaches infinity, the probability of transitioning from state $i$ to state $j$ after $t$ steps converges to the corresponding entry $\pi_j$ of the stationary distribution $\pi$.

\end{enumerate}

\end{remark}


All the Markov chains used in the rest of this paper are homogeneous. All their state transition graphs are strongly connected, so there is always only one recurrent class for each Markov chain. Each of the Markov chains has one unique stationary distribution.
An example of such a Markov chain is in the following.

\begin{example}

Fig.~\ref{fig:TransitionModel} shows an example of such Markov chains.
On the left is the state transition graph of this chain.
This Markov chain has four possible states, so this graph has four vertices.
On the right is the transition matrix of this Markov chain, and the transition probabilities are also labeled as the weights on the edges of the state transition graph. It can be easily verified that for all vertices $ i,j\in\{1,2,3,4\}$, $i\leftrightarrow j$. 
Thus, there is only one recurrent class, which is $\{1,2,3,4\}$.
The stationary distribution of this Markov chain is unique and satisfies the conditions in Definition~\ref{def:stationarydist}:
\begin{equation}
    \left\{
             \begin{array}{lr}
             \setlength{\arraycolsep}{2pt}
            \begin{pmatrix}
                \pi_0&\pi_1&\pi_2&\pi_3
            \end{pmatrix}
            =
            \begin{pmatrix}
                \pi_0&\pi_1&\pi_2&\pi_3
            \end{pmatrix}\mathbf{P}
            \\
            \pi_0+\pi_1+\pi_2+\pi_3=1
             \end{array}
\right.
\end{equation}
The unique stationary distribution is:
\begin{equation}
    \pi=\begin{pmatrix}
        0.29&0.24&0.29&0.18
    \end{pmatrix}.
\end{equation}

\begin{figure}[h]
    \centering
    \vspace{-5pt}
    \includegraphics[width=0.6\columnwidth]{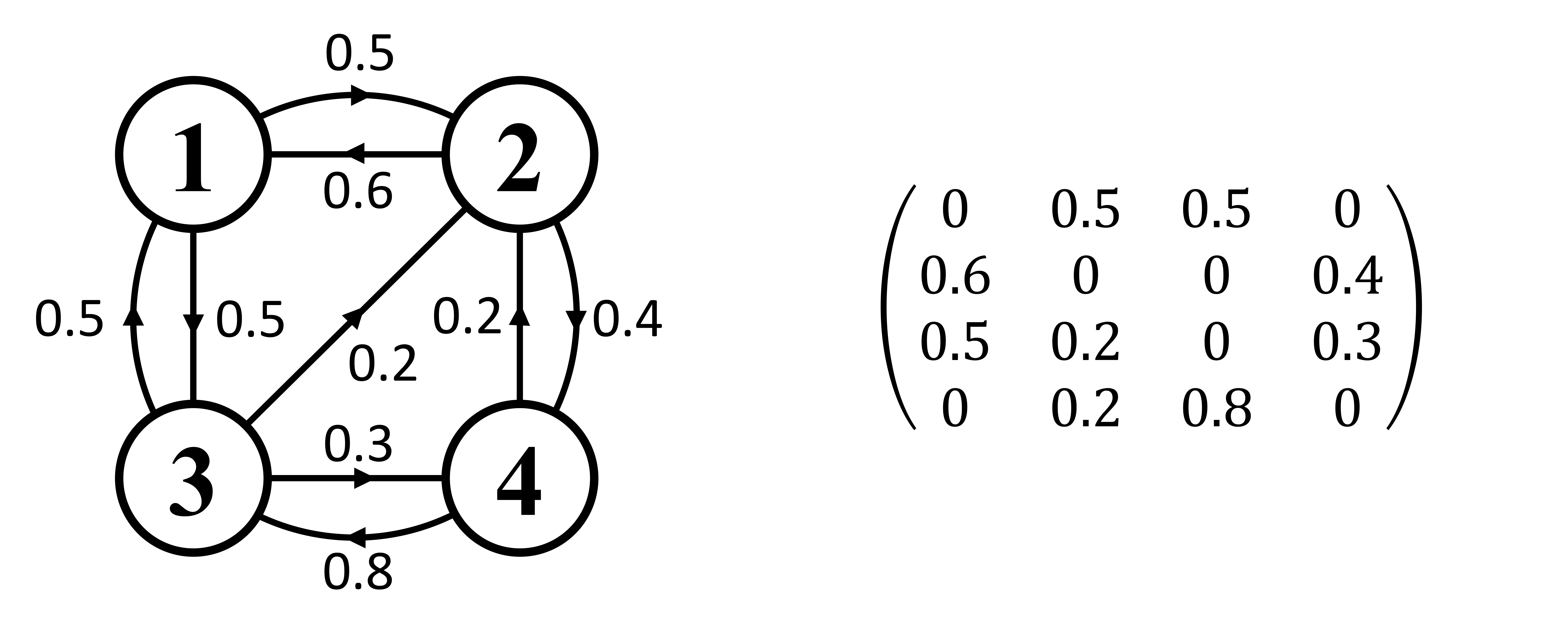}
    \vspace{-10pt}
    \caption{A Markov chain example. Left: state transition graph. Right: state transition matrix.}
    \label{fig:TransitionModel}
\end{figure}

\end{example}






\subsection{Minimum-Cost Flow Problem}
\label{sec:mincost-flow-problem}

A Minimum-Cost Flow Problem (MCFP)~\cite{enwiki:1184915760} is an optimization and decision problem to send a certain amount of flow through a flow network with the lowest cost. We first define a flow network, and an MCFP is defined upon a flow network.

\begin{definition}[Flow Network]
    A flow network $(V, E, c, s, t)$ is a directed graph in which each edge possesses a capacity and receives a flow. This network, represented by a directed graph $G = (V, E)$, is assigned a non-negative capacity function $c(\cdot)$ for each edge, ensuring $c(e)>0$ for all $e\in E$. Edges sharing the same source and target nodes are excluded. Two nodes within $G$ are distinctly identified - one as the source $s$ and the other as the sink $t$. The flow $f$ is a function that targets the edges of the graph (commonly extended to directed edges from every pair of different nodes) and returns a real number. The flow network is subject to the following constraints:
\begin{equation}
\label{equ: flow constraints 0}
\left\{
             \begin{array}{lr}
     | f(u,v) | \le c(u,v) \\
     f(u,v)=-f(v,u) \\
     \forall u \notin\{ s,t\} ,\sum_{v\in V}f(u,v) = 0 
    \end{array}
\right.
\end{equation}
It should be noted that $f(u,v)=f(\langle u,v \rangle)$ and $c(u,v)=c(\langle u,v \rangle)$ as is conventionally accepted. 
Consequently, $\sum_{v\in V}f(s,v)=\sum_{v\in V}f(v,t)$ can be verified.
\end{definition}
\begin{definition}[Minimum-Cost Flow Problem]
    In the context of MCFPs, we establish a required amount  $d$ of flow in a given flow network $(V, E, c, s, t)$. Then the flow $f$ needs to satisfy
\begin{equation}
\label{equ: flow constraints 1}
\left\{
             \begin{array}{lr}
     \sum_{v\in V}f(s,v)=d \\
     \sum_{v\in V}f(v,t)=d 
             \end{array}
\right.
\end{equation}

MCFPs also incorporate a cost (weight) function  $w(\cdot)$ for each edge. The cost associated with transporting this flow along an edge $(u,v)$ is calculated as $f(u,v) \cdot w(u,v)$. The problem mandates a flow $f$ of amount $d$ to be transmitted from source $s$ to target $t$. The optimization objective is to minimize the total cost of the flow across all edges:
\begin{equation}
    \min \enspace \sum_{(u,v)\in E} w(u,v) \cdot f(u,v)
    \label{equ: MCFP}
\end{equation}
 subject to all the constraints in the Equations (\ref{equ: flow constraints 0}) and (\ref{equ: flow constraints 1}) above.
\end{definition}


\begin{example}

\begin{figure}[h]
    \centering
    \includegraphics[width=0.7\columnwidth]{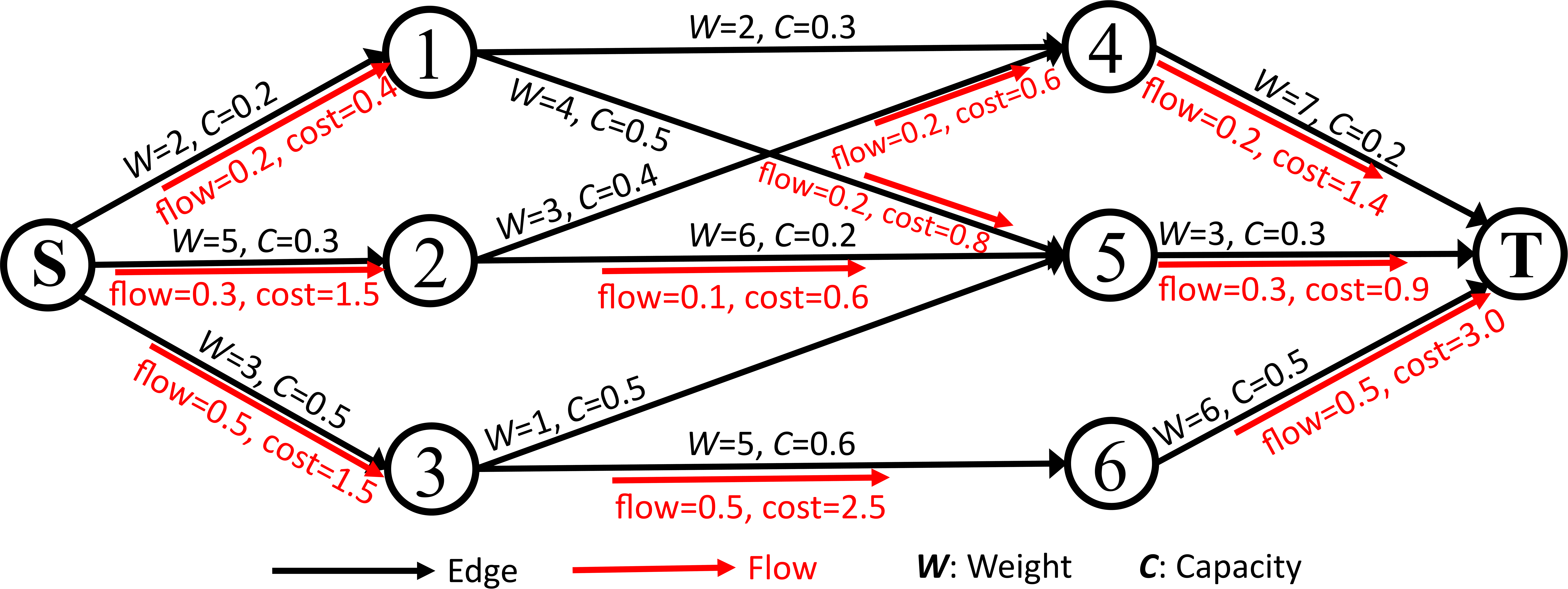}
    \caption{An MCFP example. The minimum cost solution with flow amount 1 is obtained and the actual flow on each edge is denoted by the red arrows attached to the edges.}
    \label{fig:MCFP}
\end{figure}

Fig.~\ref{fig:MCFP} shows an example of a Minimum-Cost Flow Problem. This graph has 8 vertices with one source $s$ and one sink $t$. On each edge, we list the weight and capacity, denoted by $W$ and $C$, respectively. We set the constraint $d=1$ in the Equation (\ref{equ: flow constraints 1}) and solve the minimization problem (\ref{equ: MCFP}). The resulting network flow and cost are also listed on each edge. The red arrows show the flow's direction. The total cost of the flow is $13.2$, which is the minimum cost for Equation (\ref{equ: MCFP}) for this example. It can be easily verified that the constraints in the Equations (\ref{equ: flow constraints 0}) and (\ref{equ: flow constraints 1}) are satisfied.
\end{example}

\section{Previous Works and Motivation}
\label{sec:previouswork}


In this section, we introduce the previous works on optimizing the compilation of Hamiltonian simulation. 
We summarize the previous works in two major categories: the deterministic compilation and the randomized compilation. 
We briefly discuss the pros and cons of these two types of approaches and the dilemma of combining them, followed by the motivation of this work.


\subsection{Deterministic Compilation with Fixed Hamiltonian Term Order}\label{staticorder}
To compile the operator $e^{i\mathcal{H}t}$ into basic gates, a commonly-used approach is the Trotter-Suzuki decomposition~\cite{suzuki1990fractal, Berry2006, Suzuki1991}.
Given a Hamiltonian with decomposition $\mathcal{H} = \sum_{j=1}^n h_jH_j$, the Trotter product formula~\cite{Whitfield_2011} to approximately implement $e^{i\mathcal{H}t}$ is: \begin{equation}\label{equation:productformula}
    e^{i\mathcal{H}t}=(e^{ih_1H_1\Delta t}e^{ih_2H_{2}\Delta t}\ldots e^{ih_nH_{n}\Delta t})^{\frac{t}{\Delta t}}+O(t\Delta t),
\end{equation}
The Hamiltonian terms after decomposition should be easily implemented with basic gates. Pauli strings are a widely used decomposition basis, and the circuit implementation of a Pauli string simulation has been introduced in Section~\ref{quantumsimulation}.

Note that this formula does not require any specific order of the Hamiltonian terms within one trotterization step.
The first type of approach is to find a specific order for the Hamiltonian terms $\{e^{ih_1H_1\Delta t}$, $e^{ih_2H_{2}\Delta t}$, $\dots$,  $e^{ih_nH_{n}\Delta t}\}$ with some benefits, and then repeat this fixed order $\frac{t}{\Delta t}$ times to approximate the overall Hamiltonian simulation $e^{i\mathcal{H}t}$. 
In this approach, the ordering of the Hamiltonian terms in the entire circuit is deterministic.

\begin{wrapfigure}{R}{0.5\textwidth}
    \centering
    \vspace{-5pt}
    \includegraphics[width=0.5\columnwidth]{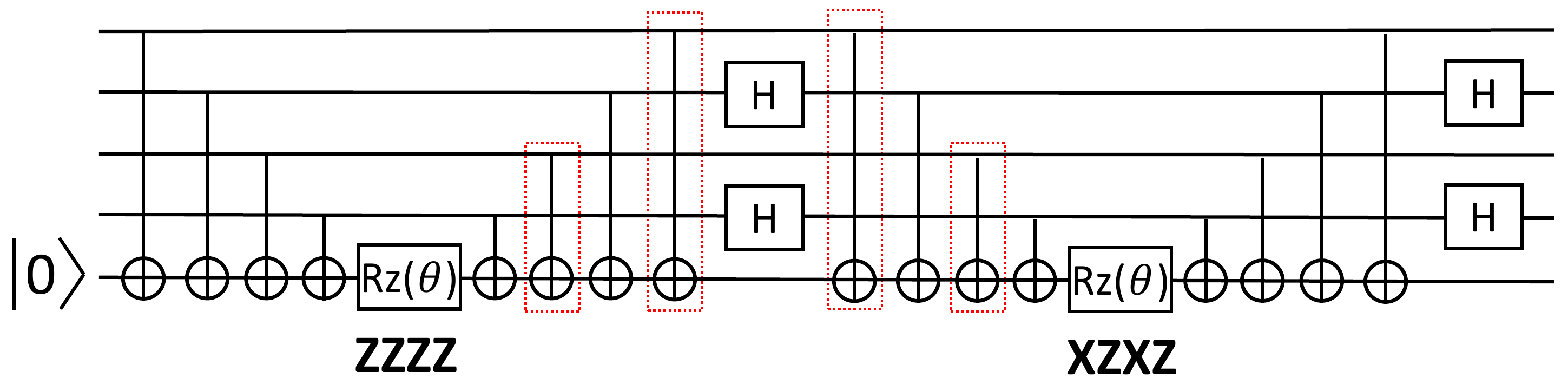}
    \vspace{-20pt}
    \caption{CNOT gate cancellation example}\vspace{-5pt}
    \label{fig:gatecancellationexample}
\end{wrapfigure}

One typical benefit of Hamiltonian term ordering is gate cancellation. 
When the Hamiltonian terms are Pauli strings, and two consecutive terms share the same Pauli operator at some qubits, there is a potential to cancel some gates.
Fig.~\ref{fig:gatecancellationexample} shows an example when applying the gate cancellation technique in~\cite{gui2020term}.
There are two Pauli string simulation circuits for string $ZZZZ$ and $XZXZ$, and they share the same $Z$ operators on two qubits. Adding one ancillary qubit can cancel all the CNOT gates associated with these two qubits (the gates marked in rectangles in Fig.~\ref{fig:gatecancellationexample}).
To maximize the matched operators for more gate cancellation, Hastings et al.~\cite{hastings2015improving}, Gui et al.~\cite{gui2020term}, and the Paulihedral compiler~\cite{li2022paulihedral} explored various lexical-based Hamiltonian term ordering.

Benefits beyond gate cancellation are also possible. Paulihedral compiler~\cite{li2022paulihedral} provides another ordering option for circuit depth reduction. Some works~\cite{cowtan2019phase, cowtan2020generic, de2020architecture, van2020circuit} groups the mutually commutative Pauli strings in the ordering to apply simultaneous diagonalization for circuit simplification.
Gui et al.~\cite{gui2020term} also find that grouping the mutually commutative Pauli strings can reduce the trotterization error and provide better approximation.

This approach optimizes the Hamiltonian term ordering within each trotterization step and repeats one optimized order across the entire simulation circuit. Their algorithmic approximation errors are still bounded by the original product formula (Equation (\ref{equation:productformula})).

\subsection{Randomized Compilation with Sampling the Hamiltonian Terms}

Recently, it has been observed that randomized compilation can usually yield asymptotically better performance compared with the deterministic approaches.
Different from the deterministic compilation where the order of the Hamiltonian terms is fixed, the randomized compilation approaches generate the quantum circuit via random sampling from the Hamiltonian terms.
Childs et al.~\cite{childs2019faster} find that randomly ordering the Hamiltonian terms within each trotterization step and using different randomly generated orders across the circuit generation can provide lower approximation error.
The SparSto algorithm~\cite{ouyang2020compilation} downsamples the Hamiltonian terms for each trotterization step and then randomly orders the terms within each step.

Especially, the qDrift randomized compiler~\cite{campbell2019random} breaks the framework of trotterization and outperforms the deterministic compilation with an asymptotically better error bound in theory and better numerical results in practice. 
Given a Hamiltonian $\mathcal{H} = \sum_{j=1}^n h_jH_j$ and let $\lambda = \sum_{j=1}^n |h_j|$.
The qDrift compiler will sample from an array of operators $e^{i\frac{\lambda t}{N}H_1}$, $e^{i\frac{\lambda t}{N}H_2}$, $\dots$, $e^{i\frac{\lambda t}{N}H_n}$ with a respective probability distribution $ \frac{|h_1|}{\sum_{j}|h_j|}$, $ \frac{|h_2|}{\sum_{j}|h_j|} $, $\dots$, $ \frac{|h_n|}{\sum_{j}|h_j|} $ and $N$ is the number of samples.
The compiled circuit is generated by cascading the randomly sampled operators. 
This circuit can approximate  $e^{i\mathcal{H}t}$ with an error bound $\Delta  \lesssim \frac{ 2 \lambda^2 t^2}{N}$.
This algorithmic error bound can outperform the first-order trotterization (Equation (\ref{equation:productformula})) in all cases and even outperform high-order trotterization most of the time in a common setup for quantum Hamiltonian simulations~\cite{campbell2019random}.


However, the benefits from deterministically ordering the Hamiltonian terms are lost in the randomized compilation. Because the Hamiltonian terms are randomly cascaded, there is not much control over how we can order the terms.

\subsection{Comparison of the Two Types of Approaches and Our Motivation}

These two types of approaches have their own pros and cons. 
Deterministic compilation approaches can leverage various benefits from carefully ordering the Hamiltonian terms, but the overall algorithmic approximation accuracy is not as good as that of randomized compilation.
Randomized compilation benefits from better algorithmic approximation accuracy while largely losing the benefits from Hamiltonian term ordering.
Naturally, to further optimize the compilation for quantum simulation, our research question is:
\begin{center}
\fbox{\parbox{3.5in}{\centering
Can we design a compiler that can leverage the benefits of both deterministic compilation and random compilation?}}
\end{center}

Intuitively, these two types of approaches are incompatible because of their different built-in deterministic and random nature. 
Previously, to the best of our knowledge, it is not known how we can naturally combine these types of approaches and reconcile the benefits from both deterministic compilation and randomized compilation.
This paper aims to systematically overcome this challenge and deliver a new compiler for quantum Hamiltonian simulation where the pros of both deterministic and randomized compilation can be leveraged simultaneously.

\section{MarQSim Intermediate Representation and Compilation Algorithm}
\label{sec:Our Technique}

In this section, we introduce \myCompilerName, a new compilation framework that can reconcile the benefits of deterministic and random compilation.
As discussed above, there is a built-in contradiction between these two types of compilation approaches, and combining them requires non-trivial efforts. 
\myCompilerNameSpace tackles this challenge by starting from randomized compilation but introducing controllability in the random sampling.
That is, \myCompilerNameSpace formulates the compilation process into sampling from a Markov stochastic process, and the controllability can be implemented by tuning the transition matrix. 
In the rest of this section, we will introduce a new intermediate representation (IR), the Hamiltonian Term Transition Graph (HTT Graph), to naturally formulate and encode the Markov chain for simulating a target Hamiltonian in our \myCompilerNameSpace framework, followed by our compilation algorithm to convert an HTT Graph into a quantum circuit.
We rigorously prove that the compilation algorithm in \myCompilerNameSpace can generate the correct quantum circuit without losing the algorithmic advantage of the randomized compilation for quantum simulation.

\subsection{Hamiltonian Term Transition Graph as Intermediate Representation}

As introduced in Definition~\ref{subsec:markov}, a state transition graph can express a homogeneous Markov chain. 
To formulate the compilation into sampling from such a Markov chain, \myCompilerNameSpace employs a new intermediate representation (IR) to encode the input Hamiltonian into the state transition graph of a Markov chain. 
This IR is called the Hamiltonian Term Transition Graph (HTT Graph).
We first introduce how to convert an input Hamiltonian in its corresponding HTT Graph IR.

\begin{definition}[Hamiltonian Term Transition Graph (HTT Graph)]
\label{def:htt-graph}
Given a Hamiltonian $\mathcal{H}=\sum_{j=1}^n h_j H_j$ that is decomposed into the summation of $n$ terms, the Hamiltonian Term Transition Graph $G_{HTT}(V, E)$ is defined as a Markov transition graph, where:
\begin{itemize}
    \item \textbf{Vertices}: The vertex set $V = \{v_1, v_2, \dots, v_n\}$ has $n$ vertices to represent the $n$ Hamiltonian terms after decomposition. There is a one-to-one correspondence between the vertices and the Hamiltonian terms. 
    A vertex $v_j$ represents the term $H_j$. 
    \item \textbf{Edges}: A directed edge from vertex $v_i$ to $v_j$ exists with weight $p_{ij}$ when $p_{ij} > 0$.
    Self-edges are allowed.
    \item \textbf{Weights}: 
    We require that all weights $p_{ij}$'s satisfy $0 \leq p_{ij} \leq 1$.
    Especially, $p_{ij}$ is set to be $0$ when there is no edge from vertex $v_i$ to $v_j$.
    Additionally, the summation of the weights of any vertex's outgoing edges should be $1$. That is,
        $
        \sum_{1\leq j \leq n} p_{ij}=1
    $
    

\end{itemize}
\end{definition}

The graph constructed by the procedure above can immediately become a state transition graph of a homogenous Markov chain.
The state space of this Markov chain is the vertex set, or equivalently, the set of all Hamiltonian terms after decomposition due to the one-to-one correspondence. 
The transition matrix of this chain is the weights of the edges  $\mathbf{P}=(p_{ij})$ as the weights are designed to satisfy the requirements of a probability transition matrix (see Definition~\ref{transitionmatrix}). 
The weight $p_{ij}$ on the edge from $v_i$ to $v_j$ represents the transition probability from the state of $H_i$ to the state of $H_j$.


\begin{wrapfigure}{R}{0.34\textwidth}
    \centering
    \vspace{-5pt}
    \includegraphics[width=0.34\textwidth]{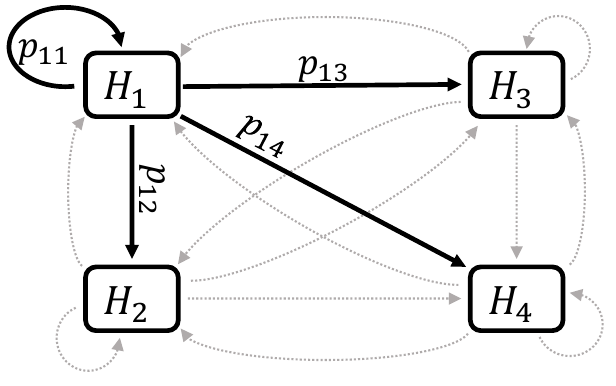}
    \vspace{-20pt}
    \caption{HTT Graph example}
    \vspace{-10pt}
    \label{fig:transition-graph}
    \end{wrapfigure}


Fig.~\ref{fig:transition-graph} shows an example of the HTT Graph for a Hamiltonian with four terms  ($H_1$ to $H_4$) after decomposition. 
This graph has four vertices to represent the four Hamiltonian terms.
The state space of the constructed Markov chain is $\{H_1, H_2, H_3, H_4\}$.
The edges have weights to represent the corresponding state transition probability.
For example, the value of $p_{13}$ represents the probability of sampling $H_3$ as the next state when the current state is $H_1$.
The probability distribution should be normalized so we have $p_{11}+p_{12}+p_{13}+p_{14} = 1$. This means when the current state is $H_1$, the summation of the probabilities for sampling all possible next states should always be $1$.

\textbf{How to find the transition probabilities?} Note that the vertices in the HTT Graph IR have clear meanings as they are associated with the Hamiltonian terms. However, the Definition~\ref{transitionmatrix} only requires the weights of the edges to satisfy the constraints of becoming a Markov chain transition matrix. The actual values are not yet determined and will be resolved later. 
In the rest of this section and the next section, we will introduce the sufficient conditions of a desired transition matrix and how to optimize it, respectively. 

\begin{algorithm}[t]
    \SetAlgoLined
    \KwIn{\begin{enumerate}
        \item Hamiltonian $\mathcal{H}=\sum_{j}h_jH_j$, where $H_j$ is a Pauli string.
        \item Evolution time $t$.
        \item Classical oracle function $Sample(p)$ that gives $i$ with the probability of $p_i$ from the distribution $p$.
        \item Initial distribution $p$ where $p_i=|h_i|/(\sum_{j}|h_j|)$.
        \item Target precision $\epsilon$.
    \end{enumerate}}
    \KwOut{Quantum circuit that simulate $e^{i\mathcal{H}t}$.}
    Construct an HTT Graph with transition matrix $\mathbf{P}$ for the Hamiltonian $\mathcal{H}$\;
    \tcp{How to obtain a transition matrix $\mathbf{P}$ is in Section~\ref{subsec:tranmat}}
    $\lambda \gets \sum_{j}|h_j|$; $N\gets\left\lceil\frac{2{\lambda}^2t^2}{\epsilon}\right\rceil$\; 
    
    $i\gets 0$; \tcp{Last sampled Hamiltonian term index}
    
    $circuits\gets\{\}$; \tcp{Empty quantum circuit}
    \For{$n=1\dots N$}{ 
        $i\gets Sample(p)$\;
        
        $p=\mathbf{P}[i]$;  \tcp{$\mathbf{P}[i]$ is the ith row of $\mathbf{P}$}
        Append $e^{i\lambda t H_i/N}$ to $circuits$\;
    }
    \Return $circuits$
    \caption{Compilation As Sampling from Markov Process}
    \label{alg:ARQSC_FT}
\end{algorithm}

\subsection{Compiling the Hamiltonian Term Transition Graph to Quantum Circuit}\label{sec:compilationalgorithm}

By converting an input Hamiltonian into an HTT Graph, we can immediately obtain a Markov chain represented by this HTT Graph IR. 
The next step is to compile the HTT Graph to the quantum circuit.
The compilation is implemented by sampling from the Markov chain of an HTT Graph.

The compilation algorithm in \myCompilerNameSpace is shown in Algorithm~\ref{alg:ARQSC_FT}.
This algorithm has several inputs, including the Hamiltonian $\mathcal{H}=\sum_{j}h_jH_j$ to be simulated, the desired evolution time $t$, a $Sample(p)$ function that can be sampled from a distribution $p$, an initial probability distribution obtained from the Hamiltonian decomposition, and the targeted approximation error $\epsilon$.
The output is a quantum circuit that approximates $e^{i\mathcal{H}t}$ with guaranteed approximation error below $\epsilon$.

The first step of the compilation algorithm is to generate an HTT Graph for the Hamiltonian $\mathcal{H}$ (by Definition~\ref{def:htt-graph}). 
How to obtain the transition matrix $\mathbf{P}$ will be introduced later, and we assume that we already have the matrix $\mathbf{P}$ for now.
We then need to prepare a parameter $\lambda$, which is the sum of the absolute values of the coefficients of all the Hamiltonian terms in the decomposition.
We also need the number of sampling steps $N=\left\lceil2{\lambda}^2t^2/\epsilon\right\rceil$.
Finally, we will sample $N$ steps from the Markov chain associated with the HTT graph.
In the first sampling step, we sample from the set of all Hamiltonian terms $\{H_1, H_2, \dots, H_n\}$ based on the initial probability distribution $p$ where $p_i=|h_i|/(\sum_j|h_j|)$.
After the first step, the remaining $N-1$ sampling steps are from a Markov chain where the current sampled term determines the probability distribution of sampling the next term. 
When the current sampled term is $H_i$, the probability distribution of the sampling the next term is the $i$-th row of the transition matrix $\mathbf{P}$, $[p_{i1}, p_{i2}, \dots, p_{in}]$. And $p_{ij}$ is the probability of sampling $H_j$ as the next term when the current term is $H_i$. This follows that \myCompilerNameSpace is formulating the compilation into sampling from a Markov chain.
After a term $H_i$ is sampled in one step, we will append the operator $e^{i\lambda t H_i/N}$ to the final generated circuit.
How to convert the $e^{i\lambda t H_i/N}$ into basic single- and two-qubit gates has been introduced earlier in Section~\ref{quantumsimulation}.


\textbf{Correctness of Algorithm~\ref{alg:ARQSC_FT}}: The immediate and most critical question regarding our compilation Algorithm~\ref{alg:ARQSC_FT} is that whether the circuit generated by Algorithm~\ref{alg:ARQSC_FT} can correctly approximate the quantum Hamiltonian simulation $e^{i\mathcal{H}t}$. If so, what is the error bound of the circuits sampled by Algorithm~\ref{alg:ARQSC_FT}?
The following theorem answers these questions and provides sufficient conditions that can guarantee the correctness and bounded approximation error in the final circuit generated from Algorithm\ref{alg:ARQSC_FT}.

\begin{theorem}[Correctness and Approximation Error Bound]\label{theorem:correctness}
Given a Hamiltonian $\mathcal{H}=\sum_{j=1}^n h_j H_j$, the quantum circuit compiled by Algorithm~\ref{alg:ARQSC_FT} correctly approximates the operator $e^{i\mathcal{H}t}$ if the HTT graph and the corresponding transition matrix $\ \mathbf{P}$ satisfies the following two conditions:
\begin{enumerate}
    \item \textbf{Strong Connectivity}: The HTT graph is a strongly connected state transition graph, meaning that only one unique recurrence class exists, and it contains all possible states. 
    \item \textbf{Stationary Distribution Preservation}: The distribution $\pi_i=|h_i|/(\sum_{j}|h_j|)$ is stationary under the transition matrix $\mathbf{P}$ (this is the initial distribution in our algorithm):
    $$
    \pi = \pi\mathbf{P} \mathrm{\quad where\quad} \pi =  \begin{pmatrix}
    \frac{|h_1|}{\sum_{j}|h_j|} & \frac{|h_2|}{\sum_{j}|h_j|} & \cdots & \frac{|h_n|}{\sum_{j}|h_j|} 
    \end{pmatrix}
    $$
\end{enumerate}

The approximation error $\epsilon$ is bounded by $$\epsilon\lesssim\frac{2\lambda^2 t^2}{N}$$ where $\lambda$ is the sum of the absolute values of the Hamiltonian term coefficients, $t$ is the simulated evolution time, and $N$ is the number of sampling steps. (defined in Algorithm~ \ref{alg:ARQSC_FT}, line 2). 
\end{theorem}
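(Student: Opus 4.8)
The plan is to work in the quantum-channel (superoperator) picture and bound the diamond-norm distance between the channel that \myCompilerNameSpace actually realizes and the ideal channel $\mathcal{U}(\rho)=e^{i\mathcal{H}t}\rho\, e^{-i\mathcal{H}t}$. Write $\tau=\lambda t/N$; as in qDrift each sampled step applies a gate $e^{\,i\,\mathrm{sgn}(h_{i})\tau H_{i}}$ (the sign of $h_i$ enters so that the per-step expected generator is proportional to $\mathcal{H}$ rather than $\sum_j|h_j|H_j$), and I write $\mathcal{U}_j$ for the corresponding unitary channel. Because Algorithm~\ref{alg:ARQSC_FT} draws the indices $i_1,\dots,i_N$ from a Markov chain rather than independently, the realized average channel is
\[
\bar{\mathcal{E}}=\mathbb{E}_{(i_1,\dots,i_N)}\big[\mathcal{U}_{i_N}\circ\cdots\circ\mathcal{U}_{i_1}\big],
\]
with expectation over the joint law $\pi_{i_1}p_{i_1 i_2}\cdots p_{i_{N-1}i_N}$. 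The goal is then $\|\bar{\mathcal{E}}-\mathcal{U}\|_\diamond\lesssim 2\lambda^2t^2/N$.

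First I would record the two roles played by the hypotheses. From Condition~2 ($\pi=\pi\mathbf{P}$ with $\pi_j=|h_j|/\lambda$) together with the initial distribution $\pi$, a one-line induction gives that the marginal law of every $X_k$ is exactly $\pi$: if $X_k\sim\pi$ then $X_{k+1}\sim\pi\mathbf{P}=\pi$. Condition~1 (strong connectivity) guarantees, via the properties quoted after Definition~\ref{def:stationarydist}, that $\pi$ is the unique stationary distribution and that the chain mixes, i.e. $p_{ab}^{(m)}\to\pi_b$; assuming the (practically always present) self-loops make the chain aperiodic, this convergence is geometric, $|p_{ab}^{(m)}-\pi_b|\le C\gamma^m$ with $\gamma<1$. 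These are exactly the two facts the argument will consume.

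The core idea is to compare \myCompilerNameSpace against the i.i.d.\ qDrift channel $\mathcal{E}^{N}$ that samples each step independently from $\pi$, routing the argument through
\[
\|\bar{\mathcal{E}}-\mathcal{U}\|_\diamond\le\|\bar{\mathcal{E}}-\mathcal{E}^{N}\|_\diamond+\|\mathcal{E}^{N}-\mathcal{U}\|_\diamond .
\]
The second term is exactly the qDrift bound $\le 2\lambda^2t^2/N$ of Campbell~\cite{campbell2019random}, which I would invoke as a black box (its single-step estimate $\|\mathcal{E}-\mathcal{V}\|_\diamond=O(\tau^2)$, with $\mathcal{V}$ the ideal time-$t/N$ channel, uses only that the per-step marginal is $\pi$, and then composes $N$ copies). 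The first term is the genuinely new quantity: it isolates the effect of replacing independence by Markov correlations, since $\bar{\mathcal{E}}$ and $\mathcal{E}^N$ share identical single-step marginals. I would bound it by expanding both channels in powers of $\tau$: the zeroth- and first-order contributions depend only on one-point marginals and therefore cancel in the difference (this is where Condition~2 is indispensable), so the leading surviving contribution is the two-point term. For a pair of steps a distance $m=k-l$ apart it is proportional to $\tau^2\big[p_{ba}^{(m)}-\pi_a\big]$; summing over the $N-m$ such pairs and over $m$ and using geometric mixing gives $\sum_{m\ge 1}(N-m)\,C\gamma^m=O(N)$, so $\|\bar{\mathcal{E}}-\mathcal{E}^{N}\|_\diamond=O(\tau^2 N)=O(\lambda^2t^2/N)$; higher-order terms carry more powers of $\tau$ and are lower order in $1/N$.

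Combining the two estimates yields $\|\bar{\mathcal{E}}-\mathcal{U}\|_\diamond\lesssim 2\lambda^2t^2/N$, and substituting the algorithm's choice $N=\lceil 2\lambda^2t^2/\epsilon\rceil$ makes this at most $\epsilon$, proving both correctness and the stated error bound. The main obstacle is the middle step—controlling $\|\bar{\mathcal{E}}-\mathcal{E}^{N}\|_\diamond$—because this is precisely where the i.i.d.\ structure underlying qDrift is lost; the whole argument hinges on the decay of two-point correlations supplied by the mixing of the chain, and hence on strong connectivity (with aperiodicity), which replaces the independence used by Campbell. A secondary technical point is making the $\tau$-expansion rigorous to all orders rather than formal: one tracks the $r$-point correlation differences between the Markov and i.i.d.\ laws and checks that each is suppressed both by $\tau^{r}$ and by geometric decay, so that the series is dominated by its second-order term.
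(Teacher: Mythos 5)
Your channel-level framing matches the paper's, but at the decisive step the two arguments part ways. The paper's appendix proof never introduces the quantity $\|\bar{\mathcal{E}}-\mathcal{E}^{N}\|_\diamond$: it runs the same one-line induction as you to conclude that the marginal law of every sampling step is exactly $\pi$ (this is where Condition~2 and the choice of initial distribution enter), declares the per-step averaged channel $\mathcal{E}_i$ therefore identical to the qDrift channel $\mathcal{E}$, writes the $N$-step evolution as $\mathcal{E}_1\mathcal{E}_2\cdots\mathcal{E}_N=\mathcal{E}^N$, and then reuses Campbell's per-step estimate $\delta\le \frac{2\lambda^2t^2}{N^2}e^{2\lambda t/N}$ multiplied by $N$. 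Strong connectivity is used only qualitatively, via a proof by contradiction that every term remains reachable (so $\pi$ is the unique stationary distribution); no mixing rate appears anywhere. Your proposal instead confronts the step-to-step correlations head on: you keep the true averaged channel $\bar{\mathcal{E}}=\mathbb{E}\big[\mathcal{U}_{i_N}\circ\cdots\circ\mathcal{U}_{i_1}\big]$, split off the i.i.d.\ qDrift channel by the triangle inequality, and pay for the correlations with two-point mixing estimates. This is a genuinely different, and more demanding, route: the identification $\bar{\mathcal{E}}=\mathcal{E}^N$ silently exchanges the expectation of a product of non-commuting channels for the product of per-step marginal averages, and for a Markov chain these differ at order $\tau^2$ per adjacent pair of steps --- exactly the order of the claimed bound --- so your middle term is precisely the quantity the paper's proof leaves implicit. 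What you pay is the constant: your bound comes out as $(2+C')\lambda^2t^2/N$ with $C'$ depending on the mixing parameters $(C,\gamma)$ of $\mathbf{P}$, rather than qDrift's bare factor $2$; under the theorem's ``$\lesssim$'' this is tolerable, but it should be stated.

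There is, however, one genuine gap: aperiodicity. Strong connectivity is irreducibility; it yields uniqueness of $\pi$ but not geometric convergence $|p^{(m)}_{ab}-\pi_b|\le C\gamma^m$, and your parenthetical appeal to self-loops is not licensed by the theorem's hypotheses. The paper itself supplies a counterexample: the matrix $\mathbf{P_{gc}}$ of Example~\ref{exampleP1} is strongly connected and satisfies $\pi\mathbf{P_{gc}}=\pi$, yet has period $2$ (every cycle alternates through $H_1$), so $p^{(m)}_{ab}$ oscillates and never converges to $\pi_b$; your sum $\sum_{m\ge 1}(N-m)C\gamma^m=O(N)$ is then replaced by $\Theta(N^2)$, and the correlation term degrades to $O(\lambda^2 t^2)$ with no $1/N$ decay. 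To cover the theorem as stated you would need either to exploit the signed cancellation of the oscillating pair terms (Ces\`aro averaging of $p^{(m)}_{ab}-\pi_b$), or to handle the periodic component by a Trotter-commutator argument (for Pauli strings the commutators are uniformly bounded, which again gives $O(\lambda^2 t^2/N)$), or to add aperiodicity as an explicit hypothesis --- which is in fact how \myCompilerNameSpace deploys the theorem, since every matrix actually used contains a $\mathbf{P_{qd}}$ component with strictly positive entries (Section~\ref{combination}), making the chain aperiodic. Note finally that the paper's own argument, because it consumes only exact stationarity of the marginals and never a convergence rate, is indifferent to periodicity; the cost is that it does not engage the correlation term your proof correctly identifies as the crux.
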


\begin{proof}
Postponed to Appendix \ref{proof_lem: MarQSim}.
\end{proof}

Theorem~\ref{theorem:correctness} guarantees that the quantum circuit generated by the \myCompilerNameSpace compilation algorithm correctly approximates the desired quantum simulation process and the error bound same as qDrift~\cite{campbell2019random} once we find a transition matrix $\mathbf{P}$ that can satisfy the sufficient conditions. Actually, the vanilla qDrift can be implemented in \myCompilerNameSpace with a special transition matrix.

\begin{corollary}[qDrift is a Special Case in \myCompilerName]\label{corollary:qdrift}
The original qDrift algorithm~\cite{campbell2019random} is a special case in our Algorithm~\ref{alg:ARQSC_FT}, where the transition matrix $\mathbf{P}$ is constructed directly by the desired stationary distribution. For a given Hamiltonian $\mathcal{H}=\sum_{j=1}^n h_j H_j$, qDrift can be implemented in \myCompilerNameSpace with the following Markov chain transition matrix:
$$
    \mathbf{P} = \begin{pmatrix}
        \pi_1 & \pi_2 & \cdots & \pi_n \\
        \pi_1 & \pi_2 & \cdots & \pi_n \\
        \cdots & \cdots & \cdots & \cdots \\
        \pi_1 & \pi_2 & \cdots & \pi_n 
    \end{pmatrix}=
    \begin{pmatrix}
        \frac{|h_1|}{\sum_{j}|h_j|} & \frac{|h_2|}{\sum_{j}|h_j|} & \cdots & \frac{|h_n|}{\sum_{j}|h_j|} \\
        \frac{|h_1|}{\sum_{j}|h_j|} & \frac{|h_2|}{\sum_{j}|h_j|} & \cdots & \frac{|h_n|}{\sum_{j}|h_j|}\\
           \cdots & \cdots & \cdots & \cdots \\
        \frac{|h_1|}{\sum_{j}|h_j|} & \frac{|h_2|}{\sum_{j}|h_j|} & \cdots & \frac{|h_n|}{\sum_{j}|h_j|} \\
    \end{pmatrix}
$$
This transition matrix indicates that the probability to sample the next term does not depend on the previously sampled term and always selects from the stationary distribution $\pi$. It can be easily verified that this matrix satisfies the two conditions in Theorem~\ref{theorem:correctness}. The fact that all $|h_j|$'s are strictly larger than $0$ implies that the transition graph is a complete graph. A complete directed graph is strongly connected. $\pi = \pi\mathbf{P}$ can be checked directly.
\end{corollary}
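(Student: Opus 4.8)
The plan is to establish the corollary in two independent parts. First, I would show that substituting the displayed $\mathbf{P}$ into Algorithm~\ref{alg:ARQSC_FT} makes its sampling procedure coincide with the i.i.d.\ sampling of qDrift. Second, I would verify that this $\mathbf{P}$ meets both hypotheses of Theorem~\ref{theorem:correctness}, so that the correctness guarantee and the $O(\lambda^2 t^2 / N)$ error bound follow immediately as a special case. Since the conditions of Theorem~\ref{theorem:correctness} are already known to imply correctness and the stated error bound, no separate analysis of the approximation error is needed here; the corollary is entirely a verification.

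For the reduction to qDrift, the key observation is that every row of $\mathbf{P}$ equals $\pi$. Consequently, the update $p \gets \mathbf{P}[i]$ in Algorithm~\ref{alg:ARQSC_FT} always sets $p = \pi$, regardless of the previously sampled index $i$. Thus each call to $Sample(p)$ draws the next Hamiltonian term independently from the fixed distribution $\pi_j = |h_j|/\lambda$, which is exactly qDrift's memoryless sampling rule. Because the operator appended at each step, $e^{i\lambda t H_i / N}$, also matches qDrift's, the two procedures induce the same probability distribution over output circuits. I would phrase this equivalence at the level of the induced distribution over circuits rather than as literal code identity, which is the one point that deserves a careful statement.

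To verify the two conditions of Theorem~\ref{theorem:correctness}, I would check them directly. For strong connectivity, note that since every Hamiltonian coefficient satisfies $|h_j| > 0$, each entry $p_{ij} = \pi_j > 0$; hence every ordered pair of vertices is joined by an edge, the HTT graph is the complete directed graph on $n$ vertices, and it is therefore strongly connected with a single recurrence class containing all states. For stationary-distribution preservation, a one-line computation gives, for each $j$,
\[
(\pi \mathbf{P})_j = \sum_{i=1}^n \pi_i\, p_{ij} = \sum_{i=1}^n \pi_i\, \pi_j = \pi_j \sum_{i=1}^n \pi_i = \pi_j,
\]
using $\sum_i \pi_i = 1$; hence $\pi \mathbf{P} = \pi$.

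With both conditions verified, Theorem~\ref{theorem:correctness} applies verbatim, which completes the argument. There is no genuine technical obstacle in this corollary, so the only care required is the precise formulation of what ``special case'' means, namely equality of the induced output distributions, together with the trivial but necessary observation that strict positivity of all $|h_j|$ is exactly what forces the complete-graph structure and hence strong connectivity.
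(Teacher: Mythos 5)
Your proposal is correct and follows essentially the same route as the paper: verifying strong connectivity via the strict positivity of all $|h_j|$ (yielding a complete directed graph) and checking $\pi\mathbf{P}=\pi$ by direct computation. Your additional care in phrasing the qDrift equivalence as equality of induced distributions over output circuits, and in writing out the one-line stationarity calculation, merely makes explicit what the paper's terse ``can be easily verified'' leaves implicit.
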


We denote the transition matrix presented by qDrift as $\mathbf{P_{qd}}$. A simple numerical example of how $\mathbf{P_{qd}}$ is constructed is provided in the following.

\begin{example}\label{eg:example-hamiltonian}
Consider $\mathcal{H}=1.0\cdot IIIZ+ 0.5\cdot IIZZ+0.4\cdot XXYY + 0.1\cdot ZXZY$. In this example, $(h_1,h_2,h_3,h_4) = (1.0,0.5,0.4,0.1)$. So we have the stationary distribution $(\pi_1,\pi_2,\pi_3,\pi_4)=(0.5,0.25,0.2,0.05)$ and the transition matrix that implements the qDrift algorithm in our \myCompilerNameSpace framework is:
$$
    \mathbf{P_{qd}}=
    \begin{pmatrix}
        0.5 & 0.25 & 0.2 & 0.05 \\
        0.5 & 0.25 & 0.2 & 0.05 \\
        0.5 & 0.25 & 0.2 & 0.05 \\
        0.5 & 0.25 & 0.2 & 0.05 \\
    \end{pmatrix}
$$
\end{example}

\section{Compiler Optimization via Min-cost Flow}
\label{subsec:tranmat}

In the previous section, we introduced the HTT Graph IR and the circuit compilation algorithm in the \myCompilerNameSpace framework, while the question of finding a transition matrix $\mathbf{P}$ remains unsolved.
We have demonstrated how to implement the baseline qDrift randomized compilation with a special transition matrix in \myCompilerName.
However, the benefits of static Hamiltonian term ordering, such as the gate cancellation, are still not incorporated in the qDrift transition matrix. 
Fortunately, the \myCompilerNameSpace framework provides great flexibility by allowing us to tune the transition matrix for different purposes. 
Note that once the transition matrix satisfies the two conditions in Theorem~\ref{theorem:correctness}, the overall simulation algorithmic efficiency and error bound are guaranteed. There exist many more matrices beyond the baseline qDrift transition matrix that can satisfy these two conditions. 
In this section, we introduce how \myCompilerNameSpace can find and optimize the transition matrix by formulating it in an optimization problem, the \textit{Min-Cost Flow Problem~(MCFP)}.
By carefully encoding the cost and capacity functions of the edges in a flow network, \myCompilerNameSpace can generate transition matrices towards different Hamiltonian term ordering tendencies, and we demonstrate how to optimize for more CNOT gate cancellation by constructing a new transition matrix. 

\subsection{Min-Cost Flow Problem Formulation with Correctness Guarantee}


We select the flow network to model our Markov chain sampling because the probability transition can be considered as probability flows that go through the vertices of the state transition graph.
To model the concept of transition in the HTT graph, the overall structure of the flow network has a bipartite graph structure.
A bipartite graph has two sets of vertices. Edges only exist between two vertices when they are in different sets.
This can naturally represent the sampling transition from one Hamiltonian term to another Hamiltonian term.

\subsubsection{Flow Network Structure from HTT Graph}
In our flow network, we have two vertices set $Prev$ and $Next$. 
Suppose there are $n$ Hamiltonian terms in the HTT graph after decomposition.
Then each of the $Prev$ and $Next$ sets will have $n$ vertices to represent the $n$ Hamiltonian terms, denoted as $Prev=\{H^{prev}_i\}$ and $Next=\{H^{next}_i\}$. 
$Prev=\{H^{prev}_i\}$ represents the previous sampled Hamiltonian terms. 
$Next=\{H^{next}_i\}$ represents the next Hamiltonian terms to be sampled.
Then an edge from $H^{prev}_i$ to $H^{next}_j$ can represent sampling $H_i$ in the previous step and then sampling $H_j$ in the next step.
We only allow edges from $Prev$ to $Next$.
A flow network also has a source vertex $S$ and a sink vertex $T$.
We have edges from the source vertex $S$ to each vertex in  $\{H^{prev}_i\}$ and from each vertex in $\{H^{next}_i\}$ to the sink node $T$.
Overall, the flow goes through  $S\rightarrow Prev \rightarrow Next \rightarrow T$. The graph of the constructed flow network is defined as:
\begin{equation}
    \left\{\begin{array}{ll}
        V=&Prev\cup Next\cup\{S,T\} \\
        E=&\{\edge{H^{prev}_i,H^{next}_j}|H^{prev}_i\in Prev\land H^{next}_j\in Next\}\cup \\
        &\{\edge{S,H^{prev}_i}|H^{prev}_i\in Prev\}\cup \\
        &\{\edge{H^{next}_j,T}|H^{next}_j\in Next\}
    \end{array}\right.
\end{equation}

Fig.~\ref{fig:flowgraph-overview} shows an example of generating the structures of the flow network (on the right) based on an HTT graph (on the left). Apart from the source node and sink node, the subgraph $(\{H^{prev}\}\cup\{H^{next}\},\edge{H^{prev},H^{next}})$ is a direct shifting of the HTT graph, where a node $H_i$ is split into $H^{prev}_i$ and $H^{next}_j$ and edge $\edge{H_i,H_j}$ is correspondent to $\edge{H^{prev}_i,H^{next}_j}$.

\begin{figure}[t]
    \centering
    \includegraphics[width=\textwidth]{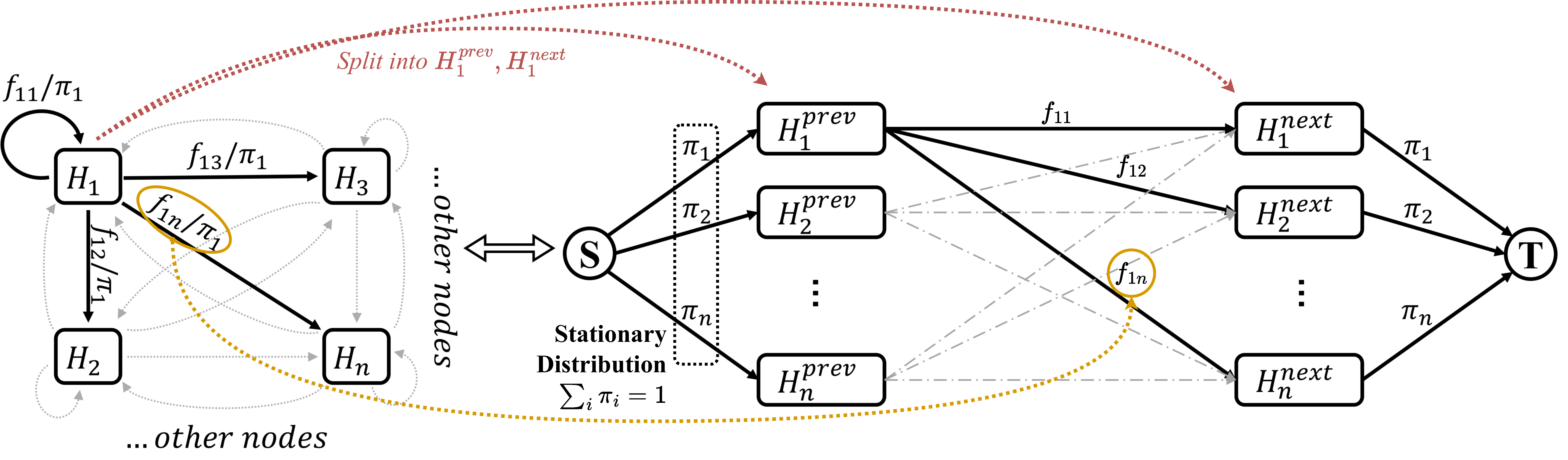}
    \caption{Constructing the flow network from the HTT Graph. The vertex $H_i$ in the HTT graph is converted and split into $H_i^{prev}$ and $H_i^{next}$ in the flow graph. The transition probability $p_{ij}$ in the HTT graph corresponds to the flow $f_{ij}$ from $H_i^{prev}$ to $H_j^{next}$ where $p_{ij} = f_{ij} / \pi_i$.}
    \label{fig:flowgraph-overview}
\end{figure}

\subsubsection{From Flow to Probability Transition Matrix}
After constructing the structure of the flow network, we then introduce how to generate a probability transition matrix from the flow on the edges. A flow function $f(\edge{u,v})$ is attached to each edge.
Consider a vertex $H_i^{prev}$ in the $Prev$ set. 
The only incoming flow of this vertex is the $ f(\edge{S, H^{prev}_i})$ from $S$. Its outgoing flows contain $f(\edge{H^{prev}_i,H^{next}_j})$ for all $H^{next}_j$ vertices in the $Next$ set.
A flow network requires that the incoming flow should be equal to the outgoing flow for all vertices except the source and sink (see the third equation in Equation (\ref{equ: flow constraints 0})). So we have 
\begin{equation}
    f(\edge{S, H^{prev}_i}) = \sum_jf(\edge{H^{prev}_i,H^{next}_j})
\end{equation}
Similarly, we have the following for the vertex $H_i^{next}$ in the $Next$ set.:
\begin{equation}
    f(\edge{H^{next}_i,T}) = \sum_jf(\edge{H^{prev}_j,H^{next}_i})
\end{equation}
When the flow network graph is constructed, the edge $\edge{H^{prev}_i, H^{next}_j}$ already represents sampling $H_i$ in the previous step and then sampling $H_j$ next and the flow in the network models the probability flow.
We can use the value of $$p_{ij} = \frac{f(\edge{H^{prev}_i,H^{next}_j})}{ f(\edge{S, H^{prev}_i})}$$ to represent the probability of transiting from $H_i$ to $H_j$ and $\sum_jp_{ij} = 1$ is immediately satisfied.

From the procedure above, a probability transition matrix $\mathbf{P}=(p_{ij})$ for the HTT graph can be extracted from the flow network. We still need the transition matrix $\mathbf{P}$ to satisfy the conditions in Theorem~\ref{theorem:correctness} to guarantee the correctness and error bound.
The following theorem provides a sufficient condition that can make the transition matrix $\mathbf{P}$ satisfy the second required condition in Theorem~\ref{theorem:correctness}, the \textbf{Stationary Distribution Preservation} condition. The first required condition (strongly connected graph) will be discussed later in Section~\ref{combination}.

\begin{theorem}[Preserving Stationary Distribution in the Flow Network]
\label{theorem:flow-constraint}
The transition matrix $\mathbf{P}$ extracted from a flow network mentioned above satisfies the second condition in Theorem~\ref{theorem:correctness} once the following equations hold:
\begin{equation}
f(\edge{S,H^{prev}_i})=f(\edge{H^{next}_i,T})=\pi_i 
\label{TH5.1eq}
\end{equation}
for all $1\leq i\leq n$.
\end{theorem}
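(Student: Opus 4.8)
The plan is to verify directly that the matrix $\mathbf{P}$ extracted from the flow network satisfies the \textbf{Stationary Distribution Preservation} condition of Theorem~\ref{theorem:correctness}, namely $\pi = \pi\mathbf{P}$. Written out componentwise, this is the claim that $\pi_j = \sum_{i} \pi_i\, p_{ij}$ for every $j$, so the whole argument reduces to establishing this single scalar identity for each fixed $j$. The normalization $\sum_i \pi_i = 1$ is already built into the definition $\pi_i = |h_i|/\sum_k|h_k|$ and requires no further work; I would also note at the outset that each $\pi_i > 0$ (since every $|h_i| > 0$ for a term actually appearing in the decomposition), which guarantees that the denominator $f(\edge{S,H^{prev}_i}) = \pi_i$ in the definition of $p_{ij}$ is nonzero and hence that $\mathbf{P}$ is well-defined.

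First I would rewrite the summand using the hypothesis. By definition $p_{ij} = f(\edge{H^{prev}_i, H^{next}_j}) / f(\edge{S, H^{prev}_i})$, and the assumption $f(\edge{S, H^{prev}_i}) = \pi_i$ turns this into the clean identity $\pi_i\, p_{ij} = f(\edge{H^{prev}_i, H^{next}_j})$. Summing over $i$ then converts the algebraic quantity $\sum_i \pi_i p_{ij}$ into the total flow entering the vertex $H^{next}_j$ from the $Prev$ layer. At this point I invoke flow conservation at $H^{next}_j$: the incoming flow equals the outgoing flow, which for this vertex is exactly $\sum_i f(\edge{H^{prev}_i, H^{next}_j}) = f(\edge{H^{next}_j, T})$ (the second conservation equation already recorded in the excerpt). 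Finally, applying the hypothesis a second time, $f(\edge{H^{next}_j, T}) = \pi_j$, yields $\sum_i \pi_i p_{ij} = \pi_j$, which is precisely the stationary equation.

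There is no real obstacle here: the result is essentially a translation between the two flow-conservation equations at the split vertex and the stationary equation $\pi = \pi\mathbf{P}$. The only care needed is bookkeeping — tracking the direction of conservation (the $Prev \to Next$ flow is incoming to $H^{next}_j$ and is balanced against the outgoing $Next \to T$ edge) and using the hypothesis on both ends of the chain $S \to H^{prev}_i$ and $H^{next}_j \to T$. The conceptual content is that pinning both boundary flows to $\pi$ forces the normalized internal flow to behave as a transition matrix fixing $\pi$, so the entire claim collapses to a one-line calculation once the substitution $\pi_i p_{ij} = f(\edge{H^{prev}_i, H^{next}_j})$ is made.
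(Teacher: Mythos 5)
Your proof is correct and follows essentially the same route as the paper's: substitute $\pi_i p_{ij} = f(\edge{H^{prev}_i,H^{next}_j})$ using the boundary hypothesis, apply flow conservation at $H^{next}_j$ to get $f(\edge{H^{next}_j,T})$, and use the hypothesis again to conclude $\sum_i \pi_i p_{ij} = \pi_j$. The only extra content in the paper's proof is a feasibility remark you were not asked to prove (when some $\pi_i > 0.5$ and self-edges are removed, no such flow exists, which the paper mitigates by splitting the term $h_iH_i$ in half); your added observation that $\pi_i > 0$ makes $\mathbf{P}$ well-defined is a small, correct bonus.
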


\begin{proof} 
The general proof is in Appendix~\ref{proof_lem:tranflow}. 

\end{proof}





\subsubsection{Encoding the Min-Cost Flow Problem}

After constructing the flow network structure and discussing the desired probabilities of the flow, we finalize the MCFP formulation (see Section~\ref{sec:mincost-flow-problem}) by attaching the capacity $c$ and cost $w$ function onto each edge and specifying the overall required amount of flow.

\textbf{Hard Constraints} The hard constraints that must be satisfied to ensure a correct transition matrix have been introduced in the Theorem~\ref{theorem:flow-constraint} above and can be naturally encoded in the capacity on the corresponding edges:
\begin{equation}
     c(\edge{S, H^{prev}_i})=c(\edge{H^{next}_i,T})=\pi_i 
\end{equation}   
where $\{\pi_i\}$ is the desired stationary distribution, forcing the amount of required flow to be $\bf 1$. Since we have $\sum_i\pi_i = 1$, these constraints will ensure the capacity of these edges is fully utilized and then forcing the flows satisfy the requirements in Theorem~\ref{theorem:flow-constraint}.

\textbf{Soft Transition Matrix Tuning} The transition matrix can be tuned by changing the cost function $w$ on the edges from $\{H^{prev}_i\}$ to $\{H^{next}_i\}$. For example, if we hope to increase the probability of sampling $H_j$ after sampling $H_i$, we simply just reduce the cost on the corresponding edge $w(\edge{H^{prev}_i, H^{next}_j})$ in the flow network.
Then the MCFP framework will tend to increase the flow $f(\edge{H^{prev}_i, H^{next}_j})$ through this edge if possible, naturally increasing the entry $p_{ij}$ in the transition matrix and the probability of sampling $H_j$ after $H_i$.

\textbf{Others} The cost and capacity on other edges can also be specified. In this paper, they are not used and we have the following default setup:
\begin{gather*} 
w(\edge{S, H^{prev}_i})=w(\edge{H^{next}_j,T})=0 \\
     c(\edge{H^{prev}_i,H^{next}_j})=\infty
\end{gather*}
This represents the cost function is not effective on the edges from $S$ to $\{H^{prev}_i\}$ and those from $\{H^{next}_i\}$ to $T$. And we do not place capacity constraints on the edges from $\{H^{prev}_i\}$ to $\{H^{next}_i\}$.

\subsection{Optimizing for CNOT Gate Cancellation} \label{sec:opt_for_CNOT}

Now we illustrate how we can optimize the transition matrix for gate cancellation with the MCFP framework. 
To encode the gate cancellation optimization, we can let the weight function of the edge $\edge{H^{prev}_i,H^{next}_j}$ represent the number of gates after applying the CNOT gate cancellation between the Hamiltonian terms:
\begin{gather*}
    w(\edge{H^{prev}_i,H^{next}_j})=CNOT\_count(H^{prev}_i,H^{next}_j)
\end{gather*}
The weight function $CNOT\_count(H^{prev}_i,H^{next}_j)$ calculates the number of CNOT gates between $R(z)$ gate of the compiled term $e^{i\lambda tH^{prev}_i/N}$ and $e^{i\lambda tH^{next}_j/N}$. 
In this paper, the Hamiltonian terms are Pauli strings and the gate cancellation between Pauli string simulation circuits has been widely studied in previous static Hamiltonian term ordering works (introduced in Section~\ref{staticorder}, Fig.~\ref{fig:gatecancellationexample}). Here, we adopt the gate cancellation method in~\cite{gui2020term}, and the weight function $CNOT\_count(H^{prev}_i, H^{next}_j)$ can be easily obtained based on the number of matched Pauli operators in the two Pauli strings.
Given the weight function, the flow function $f$ could be solved via an MCFP in the following:
\begin{equation}
\label{eqn:min-cost-flow-problem}
    \min_f \mathcal{W}=\min_f \sum_{\edge{i,j}\in E}f(\edge{i,j})w(\edge{i,j})
\end{equation}




\begin{figure}[h]
    \centering
     \vspace{-5pt}
    \includegraphics[width=0.7\textwidth]{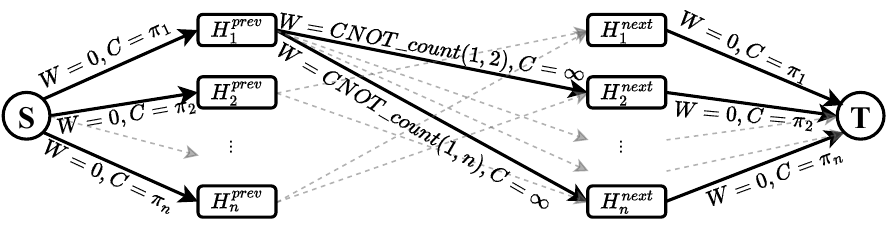}
    \caption{Overview of the Min-Cost Flow Problem for CNOT gate cancellation}
    \vspace{-5pt}
    \label{fig:flowgraph-problem}
\end{figure}

Fig.~\ref{fig:flowgraph-problem} shows an overview of the MCFP to optimize for CNOT gate cancellation. Each edge is annotated with a tuple of $(Weight, Capacity)$. 
Note that $CNOT\_count(H^{prev}_i, H^{next}_i)\equiv0$ because the two same terms can be directly combined by duplicating the time parameter without adding gates. To prevent the min-cost flow yields a trivial solution where $\mathbf{P_{gc}}=\mathbb{I}$ (the transition matrix is the identity where all flow choose path $H^{prev}_i\rightarrow H^{next}_i$ since its cost is $0$), we remove the edges of $\edge{H^{prev}_i, H^{next}_i}$ for all $1\leq i \leq n$.
An MCFP can be solved efficiently with various existing solvers~\cite{kiraly2012efficient}.
The overall procedure of obtaining a new transition matrix via MCFP is summarized in Algorithm~\ref{alg:P_1}. 

\begin{example}\label{exampleP1}
\begin{figure}[ht]
    \centering
    \includegraphics[width=0.7\textwidth]{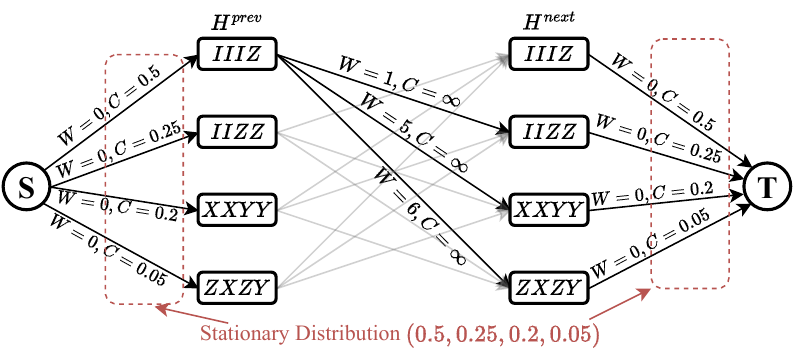}
    \caption{Min-Cost Flow Problem to CNOT gate cancellation in Example~\ref{eg:example-hamiltonian}}
    \label{fig:example-mincost-graph}
\end{figure}

Consider the Hamiltonian defined in Example~\ref{eg:example-hamiltonian}. 
Fig.~\ref{fig:example-mincost-graph} shows a concrete example of the MCFP formulated for optimizing the Hamiltonian in Example~\ref{eg:example-hamiltonian} (in Section~\ref{sec:compilationalgorithm}). 
The actual cost and capacity are labeled on each edge.
Solving the min-cost flow problem gives $f(\cdot)$ (for short, $f_{ij}=f(\edge{H^{prev}_i,H^{next}_j})$:
\begin{equation}
    \begin{pmatrix}
        f_{11}&f_{12}&f_{13}&f_{14}\\
        f_{21}&f_{22}&f_{23}&f_{24}\\
        f_{31}&f_{32}&f_{33}&f_{34}\\
        f_{41}&f_{42}&f_{43}&f_{44}\\
    \end{pmatrix}
    =
    \begin{pmatrix}
        0.0&0.25&0.2&0.05\\
        0.25&0.0&0.0&0.0\\
        0.2&0.0&0.0&0.0\\
        0.05&0.0&0.0&0.0\\
    \end{pmatrix}
\end{equation}

The transition matrix $\mathbf{P_{gc}}$ given by Algorithm~\ref{alg:P_1}, is:
\begin{equation}
    \mathbf{P_{gc}}=
    \begin{pmatrix}
        0.0 & 0.5 & 0.4 & 0.1 \\
        1.0 & 0.0 & 0.0 & 0.0 \\
        1.0 & 0.0 & 0.0 & 0.0 \\
        1.0 & 0.0 & 0.0 & 0.0
    \end{pmatrix}
    \label{gc_eg}
\end{equation}
\end{example}

\begin{algorithm}[t]
\SetAlgoLined
\KwIn{\begin{enumerate}
    \item Hamiltonian $\mathcal{H}=\sum_{k=1}^{n}h_kH_k$.
    \item Classic oracle function $CNOT\_count(H_i,H_j)\rightarrow \mathbb{N}$.
    \item Classic subroutine $solve\_mincost\_flow\_problem(G,w,C)\rightarrow f$.
\end{enumerate}}
\KwOut{$n\times n$ Transition Matrix $\mathbf{P_{gc}}$}
$V\gets\{S,T\}$; $E\gets\emptyset$\;
\For{$i\in\{1,\dots,n\}$}{
    $V\gets V\cup\{H^{prev}_i,H^{next}_i\}$; $E\gets E\cup\{\edge{S,H^{prev}_i},\edge{H^{next}_i,T}\}$\;
    $c(\edge{S,H^{prev}_i}) = c(\edge{H^{next}_i,T}) = |h_i|/(\sum_{k}|h_k|)$;  $w(\edge{S,H^{prev}_i}) = w(\edge{H^{next}_i,T}) = 0$\;
}
\For{$i\in\{1,\dots,n\}$}{
    \For{$j\in\{1,\dots,n\}$}{
        \uIf{$i \ne j$}{
            $E\gets E\cup\{\edge{H^{prev}_i,H^{next}_j}\}$\;
            $c(\edge{H^{prev}_i,H^{next}_j}) = \infty$; $w(\edge{H^{prev}_i,H^{next}_j}) = CNOT\_count(H^{prev}_i,H^{next}_j)$\;
        }
    }
}

$G_{flow}\gets(V,E)$\;
$f\gets solve\_mincost\_flow\_problem(G_{flow},w,C)$\;

\For{$i\in\{1,\dots,n\}$}{
    \For{$j\in\{1,\dots,n\}$}{
        $\mathbf{P_{gc}}_{ij}\gets\frac{f(\edge{H^{prev}_i,H^{next}_j})}{|h_i|/(\sum_{k}|h_k|)}$\;
    }
}

\Return $\mathbf{P_{gc}}$\;
\caption{Construct $\mathbf{P_{gc}}$ out of the Min-Cost Flow Problem}
\label{alg:P_1}
\end{algorithm}



Actually, we can prove that if we compile the quantum Hamiltonian simulation circuit by executing Algorithm~\ref{alg:ARQSC_FT} with the transition matrix $\mathbf{P_{gc}}$, the expectation of the number of CNOT gates in the generated circuit is minimized and the potential from the gate cancellation (depending on the $CNOT\_count$ function) is fully utilized.
\begin{proposition}[Cost is the Expectation of CNOT Gates Count]
\label{theorem:cost-is-expectation}
Given the MCFP framework in Fig.~\ref{fig:flowgraph-problem}, when the cost function $w(\edge{i,j})$ in the flow network can represent the number of CNOT gates required when transiting between the Hamiltonian terms $e^{i\lambda tH_i/N}$ and $e^{i\lambda tH_j/N}$,
the total cost $\mathcal{W}=\sum_{\edge{i,j}\in E}f(\edge{i,j})w(\edge{i,j})$ is the mathematical expectation number of CNOT gates between circuit snippets of $H^{prev}$ and $H^{next}$ if the circuit is compiled via $\mathbf{P_{gc}}$.
\end{proposition}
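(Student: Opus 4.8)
The plan is to unfold the definition of the total cost $\mathcal{W}$, replace each flow value by the corresponding entry of the transition matrix scaled by its stationary weight, and then recognize the resulting sum as a textbook expectation over the stationary Markov chain. First I would observe that only the edges from $Prev$ to $Next$ contribute to $\mathcal{W}$: by the default setup $w(\edge{S, H^{prev}_i}) = w(\edge{H^{next}_j, T}) = 0$, so the source and sink edges add nothing. Hence the cost collapses to $\mathcal{W} = \sum_{i,j} f(\edge{H^{prev}_i, H^{next}_j})\, CNOT\_count(H^{prev}_i, H^{next}_j)$, and since the diagonal edges $\edge{H^{prev}_i, H^{next}_i}$ are removed (and in any case carry $CNOT\_count \equiv 0$), the sum over $i \ne j$ equals the sum over all $i,j$.

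Next I would rewrite each flow in terms of the extracted transition matrix. Recall from the flow-to-transition-matrix construction that $p_{ij} = f(\edge{H^{prev}_i, H^{next}_j}) / f(\edge{S, H^{prev}_i})$, and from the hard capacity constraint (Theorem~\ref{theorem:flow-constraint}) that the source edge is saturated, so $f(\edge{S, H^{prev}_i}) = \pi_i = |h_i|/\lambda$. Combining these gives $f(\edge{H^{prev}_i, H^{next}_j}) = \pi_i\, p_{ij}$, where $p_{ij}$ is exactly the entry of $\mathbf{P_{gc}}$. Substituting yields $\mathcal{W} = \sum_{i,j} \pi_i\, p_{ij}\, CNOT\_count(H^{prev}_i, H^{next}_j)$.

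The final step is to identify $\pi_i p_{ij}$ as the probability that a given consecutive pair of sampled terms is $(H_i, H_j)$. Because Algorithm~\ref{alg:ARQSC_FT} draws its first term from the distribution $p_i = |h_i|/\lambda = \pi_i$, and because $\mathbf{P_{gc}}$ preserves $\pi$ (the second condition of Theorem~\ref{theorem:correctness}, guaranteed by the flow construction), every sampled state $X_t$ has marginal distribution $\pi$. Consequently the joint law of any adjacent pair is $\Pr[X_t = i,\, X_{t+1} = j] = \pi_i p_{ij}$, independent of the step index. Since $CNOT\_count(H^{prev}_i, H^{next}_j)$ is, by hypothesis, the number of CNOT gates remaining at the junction of the snippets for $H_i$ and $H_j$, the sum $\mathcal{W} = \sum_{i,j} \Pr[X_t = i,\, X_{t+1} = j]\, CNOT\_count(H^{prev}_i, H^{next}_j)$ is precisely the expectation of the CNOT count between two consecutive snippets, which is the claim.

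The main obstacle is conceptual rather than computational: one must ensure that \emph{every} junction in the sampled circuit sees the same transition law $\pi_i p_{ij}$, not merely the first one drawn from $\pi$. This is exactly where stationary-distribution preservation is indispensable, so I would make the appeal to Theorem~\ref{theorem:correctness} explicit, noting that it is the initial draw from $\pi$ together with invariance under $\mathbf{P_{gc}}$ that makes the per-junction expectation uniform across the whole circuit and hence equal to the single quantity $\mathcal{W}$.
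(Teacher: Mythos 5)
Your proposal is correct and takes essentially the same route as the paper's proof: both hinge on the identities $f(\edge{H^{prev}_i,H^{next}_j})=\pi_i\,(\mathbf{P_{gc}})_{ij}$ and the decomposition of the expected CNOT count over transition pairs weighted by $CNOT\_count$, with the paper simply running the chain of equalities in the opposite direction (from $\mathrm{E}[\#G]$ down to $\mathcal{W}$) and absorbing the zero-cost source/sink edges in the last step. Your explicit appeal to stationarity preservation to ensure every junction of the sampled circuit sees the same joint law $\pi_i p_{ij}$ is a valid sharpening of a point the paper leaves implicit (it is justified there by the recurrence $p^{(i)}=p^{(1)}\mathbf{P}^{i-1}=\pi$ in the proof of Theorem~\ref{theorem:correctness}).
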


\begin{proof}
See Appendix~\ref{sec:proof-cost-is-expectation}
\end{proof}




\subsection{Transition Matrix Combination}\label{combination}
After the constraints in Theorem~\ref{theorem:flow-constraint} are satisfied, the transition matrix generated from the MCFP has already satisfied the stationary distribution preserving condition, but the strong connectivity constraint is not guaranteed. In \myCompilerName, this constraint is realized by combining multiple transition matrices.


\begin{theorem}[Combining Transition Matrices]
\label{theorem:linear-combination}
Given an array of transition matrices $\mathbf{P}_1$, $\mathbf{P}_2$, $\dots$, $\mathbf{P}_k\}$, all of which satisfy the stationary distribution requirement ($\forall \mathbf{P}_i, \pi\mathbf{P}_i = \pi$) in Theorem~\ref{theorem:correctness}, and an array of parameters $\Theta_1$, $\Theta_2$, $\dots$, $\Theta_k$ where $0 \leq \Theta_i \leq 1$ for all $1\leq  i \leq k$ and $\sum_{i=1}^k \Theta_i = 1$, a new transition matrix $ \mathbf{P} = \sum_{i=1}^k\Theta_i\mathbf{P}_i$ also satisfies  the stationary distribution requirement in Theorem~\ref{theorem:correctness}.
\end{theorem}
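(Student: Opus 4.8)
The plan is to verify the two properties that together constitute ``satisfying the stationary distribution requirement'' of Definition~\ref{def:stationarydist}: first that $\mathbf{P} = \sum_{i=1}^k \Theta_i\mathbf{P}_i$ is a legitimate transition matrix, and second that $\pi$ remains a left fixed point, i.e.\ $\pi\mathbf{P} = \pi$. Both facts follow from linearity of matrix multiplication together with the convexity hypotheses $\Theta_i \geq 0$ and $\sum_{i=1}^k \Theta_i = 1$.

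First I would confirm that $\mathbf{P}$ meets the requirements of a transition matrix in Definition~\ref{transitionmatrix}. Writing $\mathbf{P}_i = (p^{(i)}_{ab})$, the $(a,b)$ entry of $\mathbf{P}$ is $\sum_{i=1}^k \Theta_i p^{(i)}_{ab}$, which is non-negative because each $p^{(i)}_{ab} \geq 0$ and each $\Theta_i \geq 0$, and which is at most $1$ since it is a convex combination of numbers in $[0,1]$. For the row-sum condition I would interchange the two finite summations: $\sum_b \sum_{i} \Theta_i p^{(i)}_{ab} = \sum_{i} \Theta_i \sum_b p^{(i)}_{ab} = \sum_{i} \Theta_i \cdot 1 = 1$, where the inner sum is $1$ because each $\mathbf{P}_i$ is row-stochastic and the outer sum is $1$ by the normalization of the $\Theta_i$.

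Second, I would establish stationarity directly by distributing $\pi$ across the linear combination: $\pi\mathbf{P} = \pi\big(\sum_{i=1}^k \Theta_i\mathbf{P}_i\big) = \sum_{i=1}^k \Theta_i(\pi\mathbf{P}_i)$. By hypothesis each $\pi\mathbf{P}_i = \pi$, so the right-hand side collapses to $\big(\sum_{i=1}^k \Theta_i\big)\pi = \pi$, again using $\sum_{i=1}^k \Theta_i = 1$. Since $\pi$ is left unchanged and is already normalized (being the common stationary distribution of the $\mathbf{P}_i$), both conditions of Definition~\ref{def:stationarydist} hold for $\mathbf{P}$, completing the argument.

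There is no genuine obstacle here; the argument is routine linear algebra, and the only point worth flagging is that the two hypotheses on the weights play distinct roles that should not be conflated. Non-negativity $\Theta_i \geq 0$ is precisely what guarantees $\mathbf{P}$ has non-negative entries (so that it is a probability matrix at all), whereas the normalization $\sum_{i=1}^k \Theta_i = 1$ is what simultaneously secures both the row-sum-one property and the fixed-point identity $\pi\mathbf{P}=\pi$. It is also worth recording explicitly that this theorem addresses only the stationary-distribution condition of Theorem~\ref{theorem:correctness} and says nothing about strong connectivity; as the surrounding discussion indicates, connectivity is exactly the property that combining matrices is introduced to repair, and it must be handled by a separate argument.
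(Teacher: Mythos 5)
Your proof is correct and follows essentially the same route as the paper: the paper's entire argument is the one-line linearity computation $\pi\mathbf{P} = \pi\sum_{i=1}^k\Theta_i\mathbf{P}_i = \sum_{i=1}^k\Theta_i\pi\mathbf{P}_i = \sum_{i=1}^k\Theta_i\pi = \pi$, which is exactly your second step. Your additional verification that the convex combination is itself row-stochastic with entries in $[0,1]$ is a sound (and slightly more complete) supplement that the paper leaves implicit, and your closing remark that strong connectivity is deliberately excluded and handled separately matches the paper's own discussion following the theorem.
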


\begin{proof}
    $\pi\mathbf{P} = \pi\sum_{i=1}^k\Theta_i\mathbf{P}_i = \sum_{i=1}^k\Theta_i\pi\mathbf{P}_i = \sum_{i=1}^k\Theta_i\pi = \pi$
\end{proof}

Theorem~\ref{theorem:linear-combination} indicates that we can safely combine multiple transition matrices with normalized positive weights. 
This can enable reconciling multiple optimization opportunities provided by matrices together and guarantee strong connectivity.

\textbf{Strong Connectivity Guarantee}
Recall that the transition matrix for the vanilla randomized compilation qDrift algorithm $\mathbf{P_{qd}}$ is a valid matrix for the compilation Algorithm~\ref{alg:ARQSC_FT}. All entries in $\mathbf{P_{qd}}$ are strictly larger than $0$ since we must have $|h_j|>0$ in the Hamiltonian decomposition $\mathcal{H} = \sum_jh_jH_j$. For any transition matrix $\mathbf{P}$, we can always combine it and $\mathbf{P_{qd}}$ by Theorem~\ref{theorem:linear-combination} to make the new transition matrix all positive. Then the new transition matrix will represent a complete direct graph and naturally guarantee strong connectivity.
That ensures we can always find transition matrices that satisfy the two requirements to guarantee the overall correctness of the \myCompilerNameSpace compilation framework.

\begin{example}
For the Example~\ref{eg:example-hamiltonian}, we can obtain a new transition matrix $\mathbf{P} = 0.4\mathbf{P_{qd}} + 0.6\mathbf{P_{gc}}$ ($\mathbf{P_{qd}}$ and $\mathbf{P_{gc}}$ can be found in Example~\ref{eg:example-hamiltonian} and Example~\ref{exampleP1}, respectively) to combine the qDrift randomized compilation and the gate cancellation optimization in \myCompilerName. The new matrix is:
\begin{equation}
    \mathbf{P}
    =\begin{pmatrix}
        0.2&0.4&0.32&0.08\\
        0.8&0.1&0.08&0.02\\
        0.8&0.1&0.08&0.02\\
        0.8&0.1&0.08&0.02\\
    \end{pmatrix}
\end{equation}
\end{example}

\subsection{Reasoning about Convergence Speed via Spectrum Analysis}
\label{Converge}


With the transition matrix combination technique above, we can construct many transition matrices that satisfy the conditions in Theorem~\ref{theorem:correctness}, and they have the same error bound. 
However, in practice, we observe that their actual convergence speed, indicated by the variance of the sampled circuit unitary, can still be different.
In this section, we show that we can reason about the convergence speed of the sampling process by analyzing the spectrum of the transition matrix. 

Usually, in a sampling process, the sampling result after the first few steps has a very large variance, and the sampling variance gradually decreases as the number of sampling steps increases. 
The specific sampling problem in this paper can be formulated in the following:
\textit{
Suppose the initial distribution is $\pi_{0}$ and the transition matrix is $\mathbf{P}$. We are interested in how fast $\mathbf{P}^{k}\pi_{0}$ can converge to the stationary distribution $\pi$ since this rate will affect the speed of our sampled Hamiltonian simulation circuit converging to the ideal circuit.}

We use the power method to reason about this convergence speed.
Suppose $\mathbf{P}$ has the following Jordan decomposition, and the eigenvalues in each Jordan block are in decreasing order:
\begin{equation*}
    \mathbf{P} = V \cdot diag(J_{1},\ldots,J_{p}) \cdot V^{-1},
\end{equation*}
with $|\lambda_1|>\ldots>|\lambda_p|$, where $J_{i}$ which is $n_{i}$ dimensional Jordan matrix associate with eigenvalue $\lambda_i$. 

Then we divide V according to columns, $V= (V_{1}, V_{2}, \ldots, V_{p} )$ with $rank(V_{i})=n_{i}$. 
Let $y = (y^{T}_{1}, y^{T}_{2}, \ldots, y^{T}_{p})^{T}=V^{-1}\pi_{0}$, then:
\begin{equation}
    \mathbf{P}^{k}\pi_{0} = \lambda_{1}^{k} [V_{1}y_{1}+V_{2}(\frac{J_{2}}{\lambda_{1}})^{k}y_{2}+\cdots] =   [V_{1}y_{1}+V_{2}(J_{2})^{k}y_{2}+\cdots] ,
    \label{eq:conv}
\end{equation}
Note that we always have $0 \leq |\lambda_{i}| \leq 1$, where $\lambda_{i}$ is the spectra of a transition matrix of a Markov chain. The largest eigenvalue $\lambda_{1} = 1$ is always associated with the stationary distribution $\pi$.
Here $V_{1}y_{1}$ is stationary distribution $\pi$ that will remain after many sampling steps. The following terms $V_{i}(\frac{J_{i}}{\lambda_{1}})^{k}y_{i}$ ($1<i\leq p$) will converge to zero, since the spectral radius: $\rho(\frac{J_{i}}{\lambda_{1}})=|\lambda_{i}|/|\lambda_{1}|<1$. 





Equation (\ref{eq:conv}) shows that the convergence speed is affected by the transition matrix's spectra distribution.
Except for the large eigenvalue $\lambda_1=1$, more spectra with smaller values will result in a higher convergence rate.
In the following example, we will investigate the spectra of the two transition matrices obtained by different methods for the same Hamiltonian.

\begin{figure*}[h]
    \centering
    \includegraphics[width=0.75\textwidth]{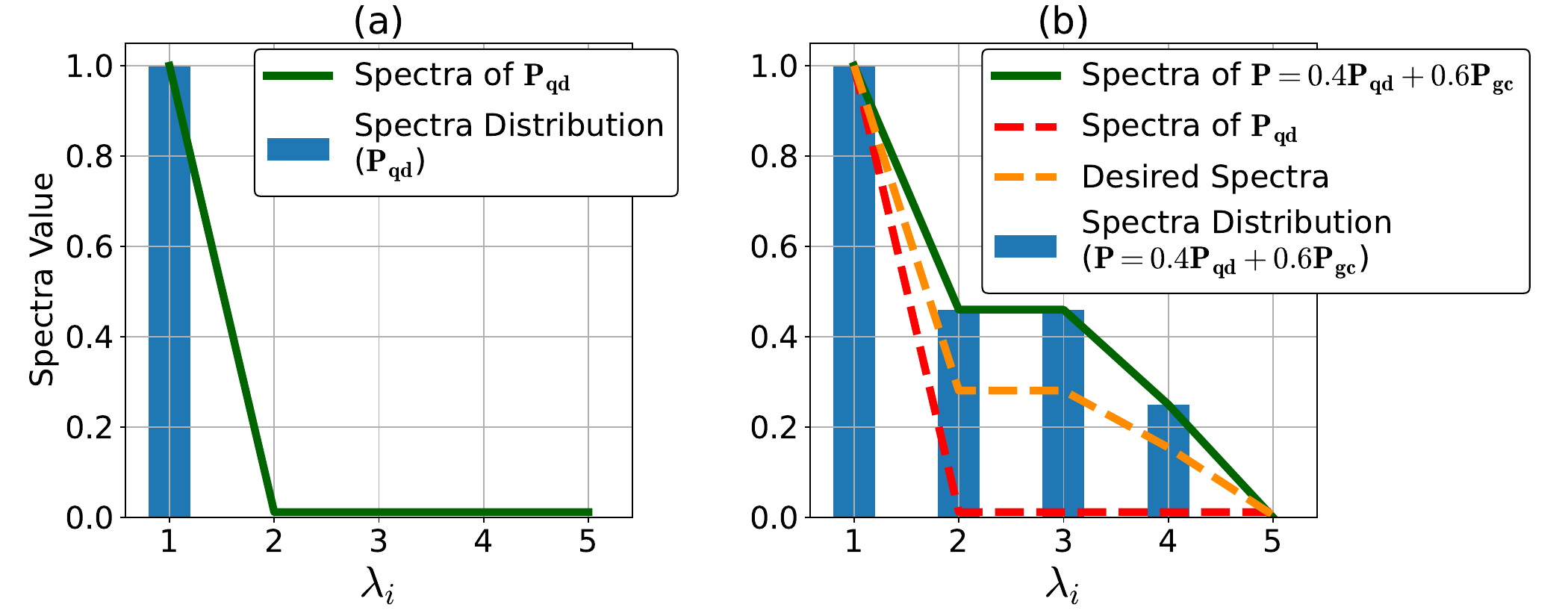}
    \vspace{-15pt}
    \caption{Examples of transition matrix spectra. The histograms show the spectra distribution, and the lines capture the eigenvalue decreasing trend.}
    \label{fig:spec}
\end{figure*}

\begin{example}

Given a Hamiltonian $\mathcal{H}=1.0\cdot IIIZY + 1.0\cdot XXIII + 0.7\cdot ZXZYI + 0.5\cdot IIZZX + 0.3\cdot XXYYZ$. We can have two transition matrices in \myCompilerNameSpace to compile this Hamiltonian: 
    \label{eg:spec}
    \begin{enumerate}
        \item  Consider the transition matrix $\mathbf{P_{qd}}$ for vanilla qDrift (see Corollary~\ref{corollary:qdrift} above). Its spectra are $\lambda_{1} = 1$ and $\lambda_{2} = \lambda_{3} = \lambda_{4} = \lambda_{5} = 0$ because it has rank 1 and the only one non-zero eigenvalue is 1 corresponding to the stationary distribution. The spectra distribution of $\mathbf{P_{qd}}$ is shown in Fig.~\ref{fig:spec} (a).
        \item  Consider the combined transition matrix $\mathbf{P} = 0.4\mathbf{P_{qd}} + 0.6\mathbf{P_{gc}}$ ($\mathbf{P_{gc}}$ is the transition matrix optimized for CNOT gate cancellation introduced in Section~\ref{sec:opt_for_CNOT}.). Its spectra are $\lambda_{1} = 1$, $\lambda_{2} = \lambda_{3} = 0.46$, $\lambda_{4} = 0.25$, and $\lambda_{5} = 0$. The spectra distribution of $\mathbf{P}$ is shown in Fig.~\ref{fig:spec} (b).
    \end{enumerate}

        To characterize the properties of the spectra, we plot the boundary of the spectra distribution to show the eigenvalue decreasing trend in Fig.~\ref{fig:spec}.

\end{example}
The transition matrix $\mathbf{P_{qd}}$ always has rank 1 with $\lambda_{2} = \cdots = \lambda_{p} = 0$, so it has the smallest variance in the sampled circuit unitary. 
In contrast, the combined transition matrix $\mathbf{P}$ has non-zero $\lambda_{i}$'s for $i>1$. Thus it has a larger variance and lower convergence speed. Such non-zero eigenvalues come from the $\mathbf{P_{gc}}$ component.
$\mathbf{P_{gc}}$ uses these additional spectra to increase the gate cancellation.

\subsection{
Compilation via Random Perturbation
}
\label{sub: An Upgrade Method}

According to the transition matrix spectra analysis above, a desired transition matrix with faster convergence and smaller sampling variance would have a spectrum with smaller eigenvalues. That is, its eigenvalue decreasing trend should be lower and closer to the trend of  $\mathbf{P_{qd}}$. An example of such spectra is denoted by an orange dash in Fig.~\ref{fig:spec} (b).
Our objective is to generate transition matrices with such spectra while maintaining the gate cancellation capability.


The insight of our method is that if we add some randomness in the transition matrix, the eigenvectors or the eigenspace will be changed. After combining multiple such matrices, the randomly altered eigenvectors will amortize the eigenvalues to obtain a trend with smaller eigenvalues. Meanwhile, we need to ensure that the stationary distribution eigenvector is always preserved.

Based on this insight, we proposed the following procedure to randomly perturb the transition matrix. Directly adding randomness to the transition matrix may result in invalid transition matrices that do not satisfy the requirements in Theorem~\ref{theorem:correctness}.
Fortunately, the randomness can be introduced in the MCFP framework without affecting the overall correctness.
We can introduce a small random perturbation $\epsilon$ to the weight function in the MCFP:
\begin{gather*}
     w(\edge{H^{prev}_i,H^{next}_j})\gets w(\edge{H^{prev}_i,H^{next}_j}) + \epsilon
\end{gather*}
Such a randomness-adding method has two benefits:
\begin{enumerate}
    \item The overall correctness is guarded by the capacity function and overall flow in the MCFP, which are not affected by the random perturbation.
    \item The gate cancellation functionality is only slightly affected since the cost function is still close to the $gate\_count$. So that we will not significantly lose the gate reduction benefits.
\end{enumerate}
We can generate multiple different randomly perturbed MCFPs, solve them to obtain the transition matrices, and finally average them.
The transition matrix obtained from the randomly perturbed MCFPs is denoted as $\mathbf{P_{rp}}$. 

Finally, all the transition matrices collected in this and the previous sections, $\mathbf{P_{qd}}$ (vanilla qDrift for strong connectivity), $\mathbf{P_{gc}}$ (for gate cancellation), and $\mathbf{P_{rp}}$ (for more randomness), can be combined into one new transition matrix to deliver compilation results that can not be achieved by any of them individually. 


\section{Evaluation}

In this section, we evaluate the performance of \myCompilerName. We will first introduce our experiment setup, then evaluate the performance improvement of \myCompilerNameSpace optimization, and finally study the compilation time consumption.







\subsection{Experiment Setup}

\textbf{Experimental Configuration} 
We prepared three configurations to demonstrate the effect of different transition matrices in \myCompilerName.
\textbf{1. Baseline} is the qDrift algorithm~\cite{campbell2019random} with transition matrix be $\mathbf{P}=\mathbf{P_{qd}}$ (defined in Corollary~\ref{corollary:qdrift}) followed by applying gate cancellation~\cite{gui2020term} on the randomized sequence. 
\textbf{2. \myCompilerName-GC} includes the transition matrix $\mathbf{P_{gc}}$ for optimize CNOT gate cancellation with a certain weight: $\mathbf{P}=0.4\mathbf{P_{qd}}+0.6\mathbf{P_{gc}}$. 
\textbf{3. \myCompilerName-GC-RP} further considers adding the random perturbations, and the transition matrix is $\mathbf{P}=0.4\mathbf{P_{qd}}+0.3\mathbf{P_{gc}}+0.3\mathbf{P_{rp}}$.
Note that the weight perturbation $\epsilon$ increases 1 in $CNOT\_count$ between two Hamiltonian terms with probability 0.5 and $P_{rp}$ are averaged from 100 different random perturbations for each benchmark.


\begin{table}[t]
  \centering
  \caption{Benchmark Information}
  \vspace{-5pt}
       \resizebox{0.45\columnwidth}{!}{ 
    \begin{tabular}{|c|c|c|c|}\hline

    \small

          Benchmark & Qubit\# & Pauli String\# & Time \\
    \hline
     Na+ & 8 & 60 & $\pi$/4 \\ \hline
     Cl- & 8 & 60 & $\pi$/4 \\ \hline
     Ar & 8 & 60 & $\pi$/4 \\ \hline
     OH- & 10 & 275 & $\pi$/4 \\ \hline
     HF & 10 & 275 & $\pi$/4 \\ \hline
     LiH (froze) & 10 & 275 & $\pi$/4 \\ \hline
     $\rm BeH_2$ (froze) & 12 & 661 & $\pi$/4 \\ \hline
     LiH & 12 & 614 & $\pi$/4 \\ \hline
     $\rm H_2O$ & 12 & 550 & $\pi$/4 \\ \hline
     SYK model 1 & 8 & 210 & 0.15 \\ \hline
     SYK model 2 & 10 & 210 & 0.15 \\ \hline 
     $\rm BeH_2$  & 14 & 661 & 0.15 \\\hline
    \end{tabular}%
     }
    \vspace{-5pt}
  \label{tab:benchmarks}%
  \end{table}

\textbf{Benchmarks} We generate the Hamiltonian for various molecule or ion physical systems 
using the PySCF library \cite{PySCF} and Qiskit library \cite{Qiskit} (including Qiskit Nature \cite{https://doi.org/10.5281/zenodo.7828767}). 
We use the Jordan Wigner fermion-to-qubit transformation~\cite{Jordan1928} and the freezing core method to control the size of the Hamiltonian by only simulating the out-layer electrons.
We also include the SYK models~\cite{haldar2023numerical} from quantum field theory.
The benchmark information is summarized in Table~\ref{tab:benchmarks}.


\textbf{Metrics}
We focus on the \textbf{CNOT gate count} as the optimization in our MCFP formulation. 
CNOT gates are usually much more expensive than single-qubit gates in the near-term noisy quantum computers~\cite{linke2017experimental, debnath2016demonstration, gambetta2019cramming}. 
Even in future fault-tolerant quantum computers with error correction, the overhead of CNOT gates is still considerable due to various constraints~\cite{maslov2016optimal,javadi2017optimized}.
We also calculate the \textbf{algorithmic accuracy}, i.e., how closely the unitary matrix $\mathbf{U}_{app}$ of the circuit generated by our compiler approximates the exact unitary evolution matrix $\mathbf{U} = e^{iHt}$.
This is indicated by the unitary fidelity defined as $tr(\mathbf{U}_{app}\cdot \mathbf{U}^{\dagger})/2^{n}$ ($n$ is the number of qubits). A fidelity closer to $1$ is better.

\textbf{Implementation} We implement \myCompilerNameSpace in Python with Numpy. We use the Python networkx package~\cite{hagberg2008exploring} to implement and solve the MCFP. 
Calculating the approximated unitary in the generated circuits $\mathbf{U}_{app}$ is costly. We accelerate this on an NVIDIA A100 GPU with PyTorch~\cite{NEURIPS2019_9015} with CUDA~\cite{cuda}. 
Due to the limitation of GPU memory, we can only evaluate up to 14 qubits.

\begin{wrapfigure}{R}{0.66\textwidth}
    \centering
    \vspace{-5pt}
    \includegraphics[width=0.65\textwidth]{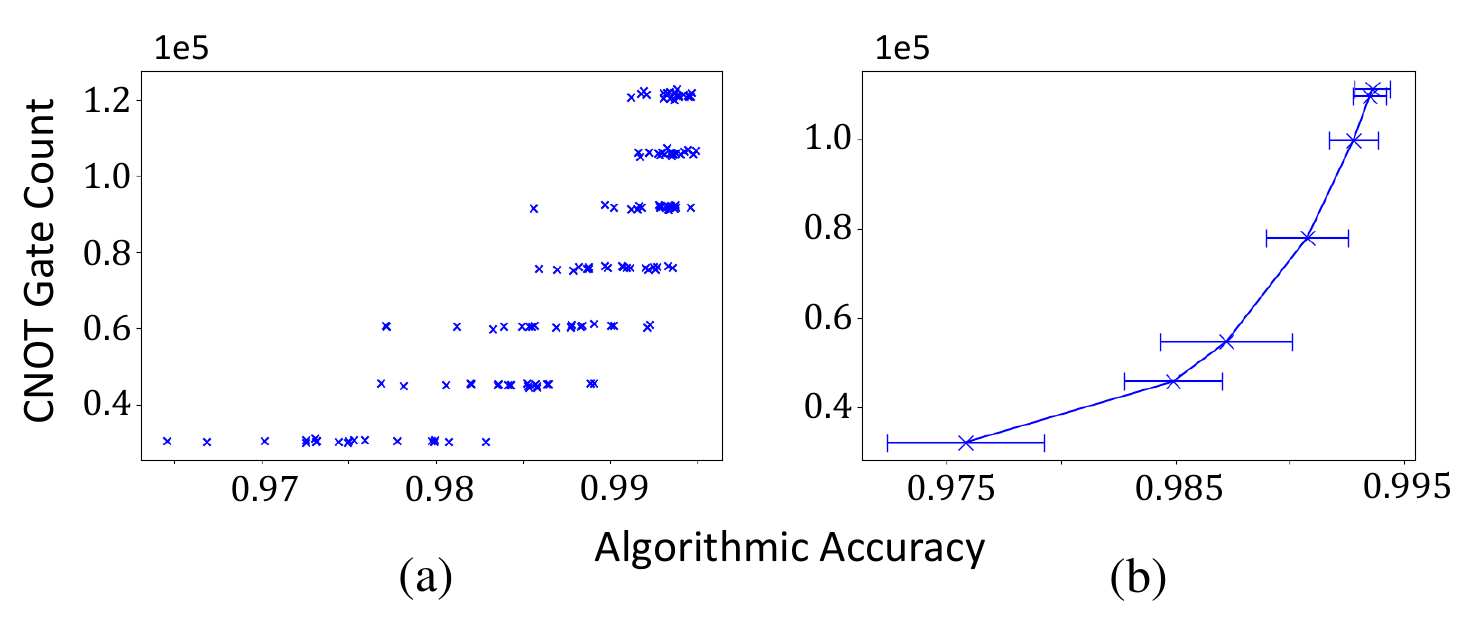}
 \vspace{-10pt}
        \caption{ Data processing with raw data from $\rm BeH_2$ (froze)
    }
    \label{fig:explanation}
\end{wrapfigure}

\textbf{Data Processing} 
The approximation error $\varepsilon$ is set to be $0.1$, $0.067$, $0.05$, $0.04$, $0.033$, $0.0286$, $0.025$ to make the distribution of the number of random sampling steps $N$ relatively uniform. 
For each experimental configuration, we repeat the compilation 20 times as our compilation is randomized (except for the largest benchmark $\rm BeH_2$, we repeat five times due to time limitation). 
Fig.~\ref{fig:explanation} (a) shows an example of our raw data (from the $\rm BeH_2$(froze) benchmark). The X-axis is the approximation fidelity. The Y-axis is the number of CNOT gates in the compiled circuit. We can roughly control the number of gates with the approximation error, and it can be observed that the data points are clustered by the number of gates. However, we can hardly know the exact approximation fidelity before we obtain the compiled quantum circuit. 
To ensure a fair comparison, we first take the average for each cluster and then use $y=a+e^{bx+c}$ to fit the data points. 
This data processing method enables us to compare the number of gates under the same simulation accuracy between 0.992 and 0.994. 
The interpolated data points, as well as the standard deviation of the raw data, are plotted in Fig.~\ref{fig:explanation} (b).

\subsection{Overall Improvement} \label{sec:overall}

\begin{figure*}[t]
    \centering
    \includegraphics[width=1\textwidth]{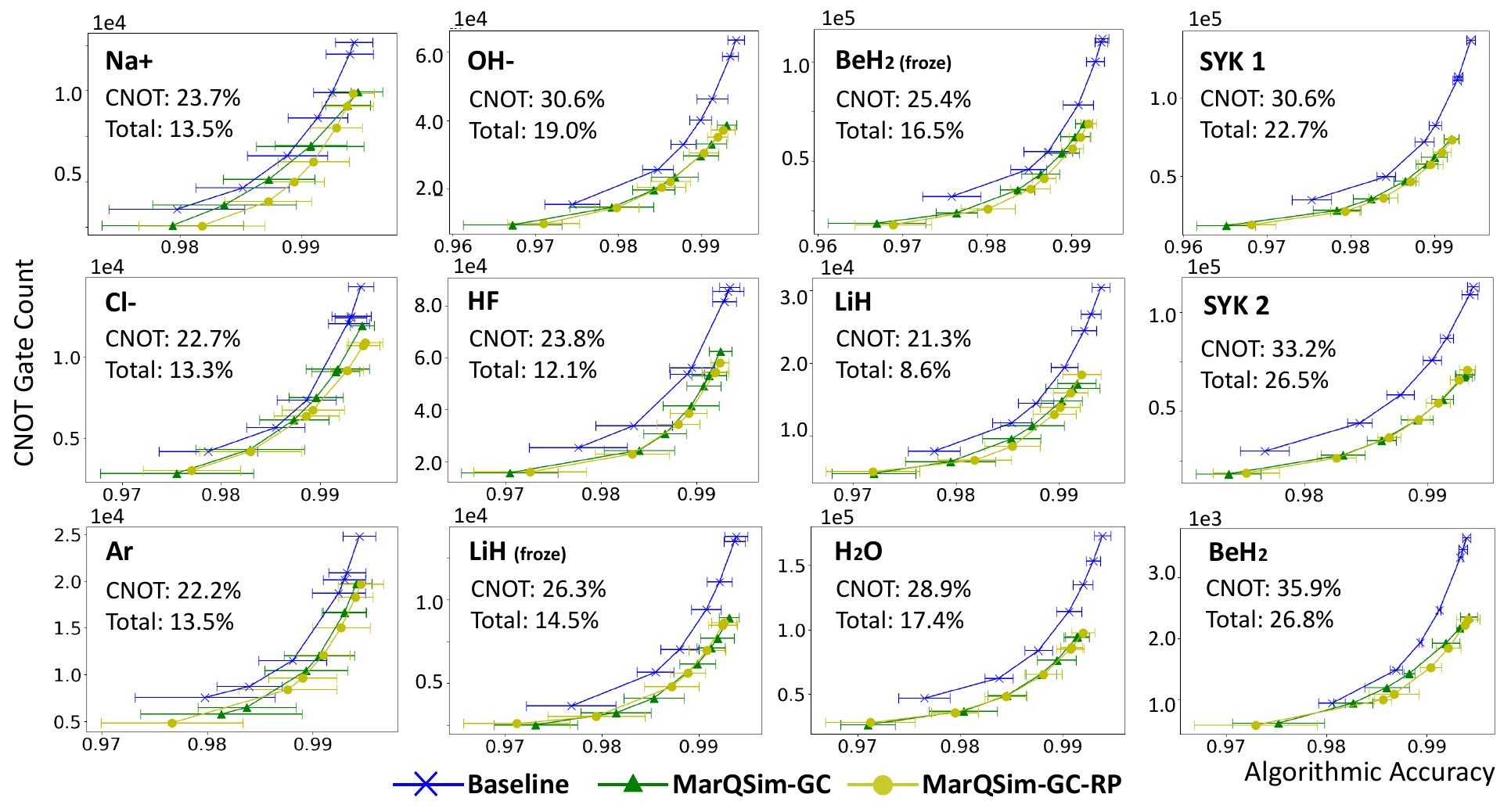}
    \vspace{-20pt}
    \caption{Overall Improvement over all benchmarks}
    \vspace{-10pt}
    \label{fig:overall}
\end{figure*}

Fig.~\ref{fig:overall} shows the results of all the benchmarks compiled with all three configurations. 
We also listed the CNOT gate reduction and total gate reduction for \textbf{\myCompilerName-GC-RP} in each sub-figure.
In summary, \textbf{\myCompilerName-GC} can reduce the number of CNOT gates by 25.1\% on average (up to 34.6\%).
We also observe that \textbf{\myCompilerName-GC} can reduce the number of single-qubit gates by 2.1\% even if the transition matrix $\mathbf{P_{gc}}$ is not tuned for removing single-qubit gates.
The total gate reduction for \textbf{\myCompilerName-GC} is 14.6\% on average. \textbf{\myCompilerName-GC-RP} can further achieve an average 27.0\% reduction on CNOT gates, 5.0\% reduction on single-qubit gates, and 17.0\% reduction for total gates. 
In addition, we measure the standard deviation ($\sigma$) of the data point for each target accuracy setting. The result shows that \textbf{\myCompilerName-GC-RP} can reduce the standard deviation by 8.3\% on average compared with the \textbf{\myCompilerName-GC}.
This indicates that adding more randomness can make the compilation output more stable and reduce the probability of obtaining low-accuracy circuits.



\subsection{Varying Transition Matrix Combination}
\label{sec:hyper}

In Section ~\ref{sec:overall}, the ratio of transition matrices ($\mathbf{P_{qd}}$, $\mathbf{P_{gc}}$, and $\mathbf{P_{rp}}$) is fixed. In this section, we adjust the ratio combining $\mathbf{P_{qd}}$ and $\mathbf{P_{gc}}$ to understand how it affects the CNOT gate reduction.
The results are shown in Fig.~\ref{fig:ratio}.
In each subgraph, we list the CNOT gate reduction corresponding to each combination ratio. Compared with the baseline that uses only qDrift, a higher combination ratio on $\mathbf{P_{gc}}$ will result in higher CNOT gate reduction since the transition matrix will be more tailored for gate reduction. For the eight benchmarks, the average CNOT reduction rates are $10.3\%$, $23.8\%$, $28.0\%$ for the transition matrix settings $\mathbf{P}=0.8\mathbf{P_{qd}}+0.2\mathbf{P_{gc}}$, $\mathbf{P}=0.4\mathbf{P_{qd}}+0.6\mathbf{P_{gc}}$, $\mathbf{P}=0.2\mathbf{P_{qd}}+0.8\mathbf{P_{gc}}$ respectively.
We also observe that there is an algorithmic accuracy loss as the $\mathbf{P_{gc}}$ component continues to increase.
As discussed in Section~\ref{Converge}, a larger $\mathbf{P_{gc}}$ component will lead to matrix spectra with larger eigenvalues, and thus the sampling will converge slower with larger variance.

\begin{figure*}[t]
    \centering
    \includegraphics[width=1\textwidth]{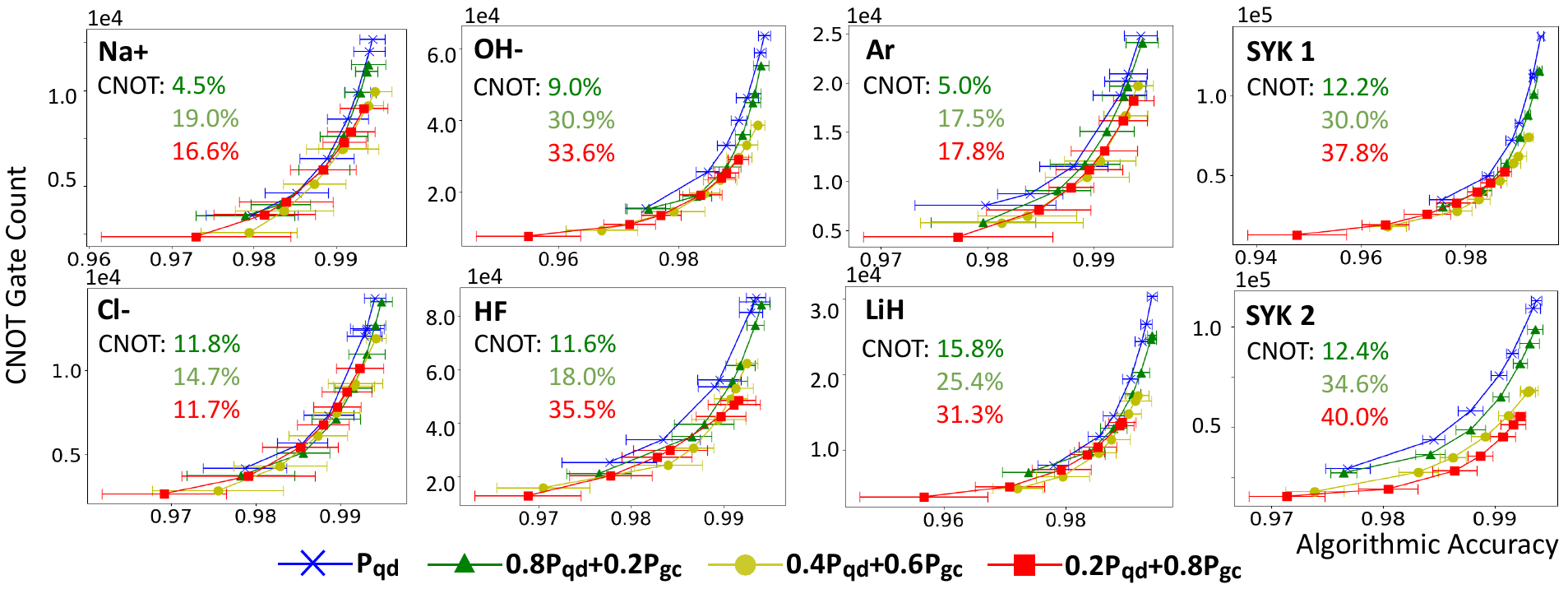}
    \caption{Compilation results of varying ($\mathbf{P_{qd}}$, $\mathbf{P_{gc}}$) combination ratios}
    \label{fig:ratio}
\end{figure*}


\subsection{Matrix Spectra Change with Random Perturbation}
\label{sec:stdanalysis}

\begin{figure*}[t]
    \centering
    \includegraphics[width=0.8\textwidth]{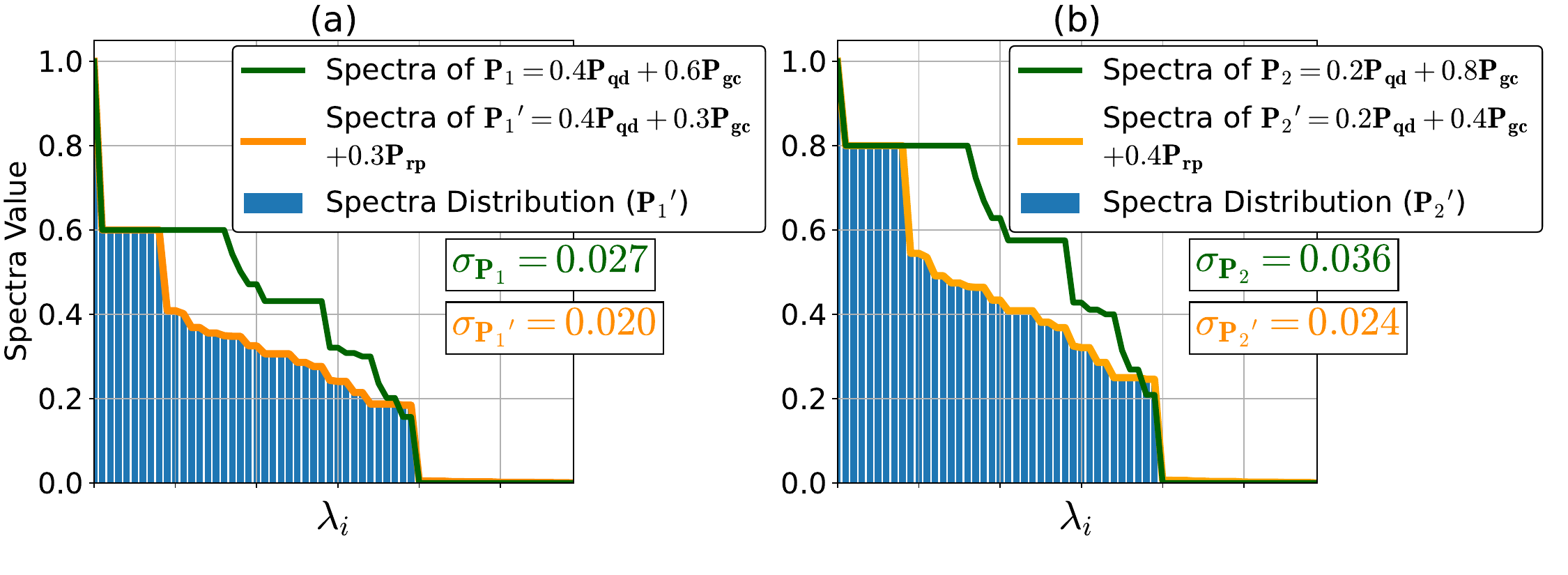}
    \vspace{-15pt}
    \caption{Transition matrix spectra for Na+ benchmarks with different matrix combination configurations. The variance $\sigma$ of the sample circuit algorithmic accuracy for each configuration is also listed.}
    \label{fig:Expspec}
\end{figure*}

In this section, we investigate the impact of randomness in the transition matrix experimentally. We select the combination $\mathbf{P}_{1}=0.4\mathbf{P_{qd}}+0.6\mathbf{P_{gc}}$ and $\mathbf{P}_{2}=0.2\mathbf{P_{qd}}+0.8\mathbf{P_{gc}}$ as baseline combination ratios, and introduce their randomly perturbed counterparts $\mathbf{P_1^\prime}=0.4\mathbf{P_{qd}}+0.3\mathbf{P_{gc}}+0.3\mathbf{P_{rp}}$ and $\mathbf{P_2^\prime}=0.2\mathbf{P_{qd}}+0.4\mathbf{P_{gc}}+0.4\mathbf{P_{rp}}$, respectively. This strategy can exclude the impact of different $\mathbf{P_{qd}}$ component ratios.


Fig.~\ref{fig:Expspec} shows the spectra of the transition matrices of all these configurations for the Na+ benchmark and the trends for other benchmarks are similar.
The green lines represent the eigenvalue decreasing trends of the transition matrices without perturbation, and the orange lines represent those trends for the transition matrices with random perturbation components.
As discussed in Section~\ref{Converge}, a transition matrix with a spectrum that has smaller eigenvalues will have faster convergence speed and smaller variance in the sampled circuits.
It can be easily observed that there are significant gaps between the spectra distribution w/ and w/o the random perturbation for both configurations plotted in (a) and (b).
This spectra difference leads to a substantial reduction in the standard deviation of the sampled quantum circuit. When $\mathbf{P_{qd}}$ contributes 0.4, the reduction of reduction of standard deviation is $26\%$ comparing $\sigma_{P_1^\prime}$ against $\sigma_{{P_1}}$. This reduction is $33\%$ comparing $\sigma_{P_2^\prime}$ against $\sigma_{{P_2}}$ when $\mathbf{P_{qd}}$ contributes 0.2.

\subsection{Impact of the Evolution Time}

\begin{figure*}[t]
    \centering
    \includegraphics[width=1\textwidth]{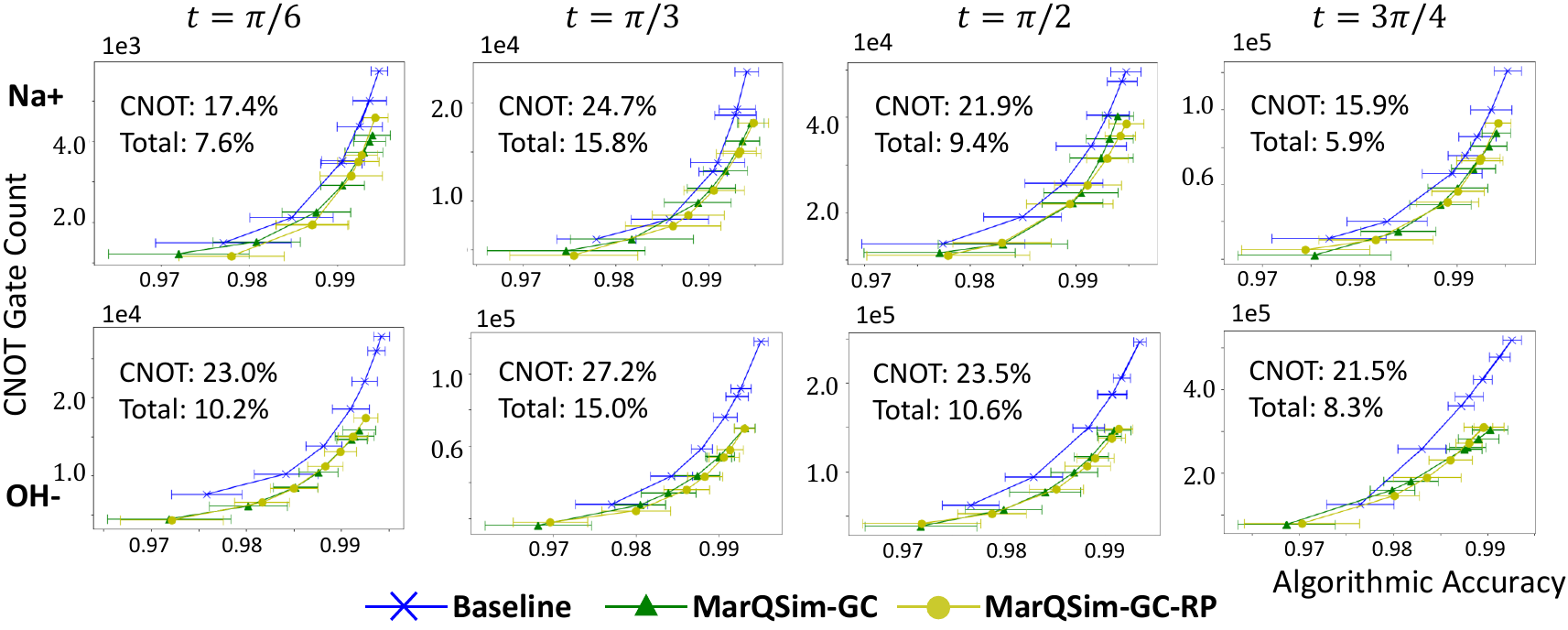}
    \vspace{-20pt}
    \caption{Compilation optimization effect with different evolution times}
    \label{fig:evolution}
\end{figure*}

We also studied the impact of different evolution times. Two ions are chosen as the benchmarks: Na+ and OH- with qubit numbers equal to 8 and 10, respectively.
Four evolution times are selected: $t = \frac{\pi}{6}, \frac{\pi}{3}, \frac{\pi}{2}, \frac{3\pi}{4}$. 
The results are shown in Fig.~\ref{fig:evolution}. Similarly, CNOT gate reduction and total gate reduction for \textbf{\myCompilerName-GC-RP} are also listed in each sub-figure. 
For \textbf{\myCompilerName-GC}, the average CNOT gate reduction is 21.8\%, 24.7\%, 17.9\%, 24.8\% for $t = \frac{\pi}{6}, \frac{\pi}{3}, \frac{\pi}{2}, \frac{3\pi}{4}$ respectively. For \textbf{\myCompilerName-GC-RP}, the reduction is 20.2\%, 25.9\%, 22.7\%, 18.7\%. 
For the standard deviation, the result shows that \textbf{\myCompilerName-GC-RP} can reduce the standard deviation by 11.6\%, 8.5\%, 5.8\%, -3.4\% for $t = \frac{\pi}{6}, \frac{\pi}{3}, \frac{\pi}{2}, \frac{3\pi}{4}$ respectively, compared with the \textbf{\myCompilerName-GC}.
These results confirm that the benefit of \myCompilerNameSpace is not affected by longer time simulation.

\subsection{Compilation Time Analysis and Algorithm Complexity}



In addition, we studied the scalability of \myCompilerNameSpace by using the randomly generated Hamiltonian at different scales. 
We randomly generate Hamiltonians with 10, 20, and 30 qubits. For each qubit count, the number of Hamiltonian terms (Pauli strings) is selected to be 100, 500, and 1000. 
We recorded the execution time to get the transition matrix and circuit, respectively. In this part, the evolution time is set to be $\frac{\pi}{4}$, and the target accuracy is set to be $0.05$. The compilation time results are shown in Table~\ref{tab:scale}.
The entire \myCompilerNameSpace compilation process has two major phases. The first phase is to generate the transition matrix via MCFP. 
The second phase is to sample from the Markov chain to assemble the final circuit.
As shown in Table~\ref{tab:scale}, the compilation time mainly depends on the number of Pauli strings.

\begin{table}[t]
  \centering
  \caption{Compilation time analysis ($t=\frac{\pi}{4}$, $\epsilon = 0.05$)}
  \vspace{-5pt}
      \resizebox{0.98\columnwidth}{!}{ 
    \begin{tabular}{|c|c|c|c|c|c|c|c|}
    \hline
         \multirow{2}{*}{Qubit\#} & \multirow{2}{*}{Pauli String\#} & \multicolumn{3}{c|}{Transition Matrix Generation} & \multicolumn{3}{c|}{Circuit Generation}   \\
    \cline{3-8}
     &  & $ \enspace \enspace \mathbf{P_{qd}} \enspace \enspace$ & $\enspace \enspace \mathbf{P_{gc}} \enspace \enspace $ & $\mathbf{P_{rp}}$ & \textbf{Baseline} & \textbf{\myCompilerName-GC} & \textbf{\myCompilerName-GC-RP}\\
    \hline
    \multirow{3}{*}{10} & $100$ & 0.003 & 0.258 & 0.246 & 1.17 & 1.15 & 1.09 \\
\cline{2-8}
    & $500$ & 0.06 & 22.5 & 23.4 & 78.4 & 75.0 & 76.7 \\ \cline{2-8}
    & $1000$ & 0.225 & 169.3 & 168.4 & 585.0 & 584.5 & 580.7 \\
    \hline
    \multirow{3}{*}{20} &  $100$ & 0.002 & 0.271 & 0.240 & 1.12 & 1.11 & 1.10 \\
\cline{2-8}
    & $500$ & 0.059 & 22.3 & 22.4 & 70.2 & 71.7 & 71.5 \\ \cline{2-8}
    & $1000$ & 0.24 & 197.2 & 170.6 & 580.8 & 576.0 & 573.7 \\
    \hline
    \multirow{3}{*}{30} & $100$ & 0.003 & 0.276 & 0.268 & 1.18 & 1.18 & 1.16\\
\cline{2-8}
    & $500$ & 0.06 & 21.9 & 22.2 & 72.1 & 72.3 & 73.3 \\ \cline{2-8}
    & $1000$ & 0.232 & 179.4 & 173.8 & 597.3 & 565.0 & 581.1 \\
    \hline
    \end{tabular}%
    }
  \label{tab:scale}%
\end{table}%

We now estimate the computational complexity for the entire \myCompilerNameSpace compilation flow.
Suppose we have an $L$-qubit system with $n$ Pauli strings, $\lambda=\sum_{i=1}^n |h_i|$ is the summation of coefficient. The evolution time is $t$ and the target accuracy is $\epsilon$. First, we need to get the Markov transition matrix by solving an MCFP, in which $O(n^2L)$ time is needed to calculate the weight on each edge shown in Fig.~\ref{fig:flowgraph-problem}, and $O(n^3log(n)log(nL))$ is needed by using network simplex method solving the flow model~\cite{tarjan1997dynamic}. The second step is using the transition matrix to sample the circuit. The total number of sampling steps is $N=\left\lceil2{\lambda}^2t^2/\epsilon\right\rceil$. In each step, we need to sample from a discrete distribution of size $n$, which can take $log(n)$ time~\cite{bringmann2012efficient}.  Thus the overall computational complexity is $O(n^2L+n^3log(n)log(nL)+\frac{{(\lambda}t)^2}{\epsilon}log(n))$.
The entire compilation procedure can be accomplished in polynomial time, and the time-consuming sampling process can be easily paralleled.
\section{Discussion and Conclusion}


The compiler infrastructures co-evolve with the new applications, languages, and hardware platforms. 
Today's classical computer hardware is highly sophisticated, and people usually write programs using high-level programming languages.
A classical compiler employs various intermediate representations at different levels and applies corresponding program transformation/optimizations.
We observe that the quantum computing area is in a transition with significant advancements in computation capability. 
Quantum devices with dozens or even hundreds of qubits of various technologies~\cite{moses2023race, IonQFort9:online, APreview62:online,chow2021ibm,wurtz2023aquila, ebadi2021quantum} are becoming publicly available. 
With hardware advancements, we are able to execute programs with larger state space and more operations more reliably, which is essential for quantum advantage.
Unfortunately, quantum software optimization and compilation techniques have not kept pace with hardware improvements. 
The low assembly-level quantum program compilation/optimizations for local gates/qubits~\cite{murali2019noise, li2019tackling, soeken2013white, nam2018automated, maslov2008quantum, miller2010lowering} have been well explored in the past decade while the high-level quantum compilation beyond the low-level gates and circuits is much less developed.

This paper aims to advance the area of high-level quantum compilation and proposes the \myCompilerNameSpace framework targeting the quantum Hamiltonian simulation subroutine.
Even though it is not universal, the quantum Hamiltonian simulation appears in quantum programs for many purposes far beyond simulating a physical system, and thus our framework can benefit a wide range of quantum applications.
More specifically, the proposed \myCompilerNameSpace framework reconciles the optimization opportunities from both deterministic Pauli string ordering and random sampling with new program intermediate representation and compilation optimization algorithms, exploring new program optimization design space that is, to the best of our knowledge, not reachable with previous approaches. 
We prove the correctness and error bound of our compilation algorithm and study the convergence speed and sampling variance via matrix spectra analysis, sharpening our understanding of the quantum simulation program compilation.
We anticipate that \myCompilerNameSpace will continue to benefit future quantum algorithms, as quantum simulation has been a longstanding principle in algorithm design over the past few decades.

Looking forward, \myCompilerNameSpace is currently tuned to leverage the gate cancellation opportunities, while other optimizations in deterministic ordering may also be incorporated by combining more transition matrices. These opportunities can be grouping mutually commutative Hamiltonian terms together to reduce the approximation error~\cite{gui2020term}, or even further optimized by taking the underlying hardware architecture into consideration~\cite{li2022paulihedral}.
How to find transition matrices targeting these objectives and how to further modulate the spectrum of the transition matrices remain open problems.

Moreover, the high-level optimization principle can be applied to other quantum application domains or algorithmic subroutines.
There are several other significant techniques commonly used in quantum algorithm design, such as quantum phase estimation ~\cite{nielsen2010quantum}, amplitude amplification~\cite{brassard1997exact}, and quantum singular value transformation~\cite{gilyen2019quantum}, as well as promising application domains like machine learning~\cite{biamonte2017quantum} and optimization~\cite{abbas2023quantum}. Developing new high-level algorithmic quantum compiler optimizations for them remains an open problem that could be addressed in future work.

\bibliographystyle{ACM-Reference-Format}
\bibliography{ref}

\newpage
\appendix
\section{Appendix}

\subsection{Pauli Matrices}
\label{sec:pauli-matrix}

The identity gate $I$ is the identity matrix $\mathbb{I}$. The matrix representation of the Pauli matrices is:
\begin{equation*}
    X=\mqty(0 & 1 \\ 1 & 0)\quad Y=\mqty(0 & -i \\ i & 0)\quad Z=\mqty(1 & 0 \\ 0 & -1)
\end{equation*}

\subsection{Proof of Theorem~\ref{theorem:correctness}}
\label{proof_lem: MarQSim}

\textbf{[Correctness and Approximation Error Bound]}
Given a Hamiltonian $\mathcal{H}=\sum_{j=1}^n h_j H_j$, the quantum circuit compiled by Algorithm~\ref{alg:ARQSC_FT} correctly approximates the operator $e^{i\mathcal{H}t}$ if the HTT graph and the corresponding transition matrix $\ \mathbf{P}$ satisfies the following two conditions:
\begin{enumerate}
    \item \textbf{Strong Connectivity}: The HTT graph is a strongly connected state transition graph, meaning that only one unique recurrence class containing all possible states exists. 
    \item \textbf{Stationary Distribution Preservation}: The distribution $\pi_i=|h_i|/(\sum_{j}|h_j|)$ is stationary under the transition matrix $\mathbf{P}$ (this is the initial distribution in our algorithm):
    $$
    \pi = \pi\mathbf{P} \mathrm{\quad where\quad} \pi =  \begin{pmatrix}
    \frac{|h_1|}{\sum_{j}|h_j|} & \frac{|h_2|}{\sum_{j}|h_j|} & \cdots & \frac{|h_n|}{\sum_{j}|h_j|} 
    \end{pmatrix}
    $$
\end{enumerate}

The approximation error $\epsilon$ is bounded by $\epsilon\lesssim2\lambda^2 t^2/N$, where $\lambda$ is the sum of the absolute values of the Hamiltonian term coefficients, $t$ is the simulated evolution time, and $N$ is the number of sampling steps. (defined in Algorithm~ \ref{alg:ARQSC_FT}, line 2). 

\begin{proof}

The first condition asserts the stationary distribution mentioned in condition (2) is unique, and every Hamiltonian can be reached for a given initial state. This can be proved by contradiction. Suppose the HTT graph is not strongly connected. Then there exists at least one pair of vertices $v_i$ and $v_j$ where we can not reach $v_j$ from $v_i$.
It is possible that we sample the term $H_i$ in the first step and then the term $H_j$ cannot be included in the entire sampling process. 
This will change the overall Hamiltonian we are simulating and lead to wrong results.
By contradiction, the HTT graph needs to be strongly connected.

Next, we construct the evolution channel for every step according to our transition matrix. We denote $\tau=\lambda t/N$.
For step $i$, the evolution channel can be constructed as:
\begin{align}
\nonumber
\mathcal{E}_{i}(\rho) & = \sum_{j=1}^{n} p^{(i)}_j e^{ i \tau H_j } \rho e^{ - i \tau H_j } 
\end{align}
where $p_j^{(i)}$ stands for the probability of getting the Hamiltonian term $H_j$ in the i-th sampling step. 

Note that our initial state probability $p^{(1)}$ in Algorithm~\ref{alg:ARQSC_FT} is exactly the stationary distribution $\pi$ of transition matrix, thus $p^{(i)} = (p_1^{(i)}, p_2^{(i)}, \ldots, p_n^{(i)})$ can be calculated recurrently:
\begin{equation*}
    p^{(i)} = p^{(i-1)}\mathbf{P}=\ldots=p^{(1)}\mathbf{P}^{i-1}=\pi
\end{equation*}
indicates the evolution channel in each step is identical:
\begin{align}
\nonumber
\mathcal{E}_{i}(\rho) = \sum_{j=1}^{n} p^{(i)}_j e^{ i \tau H_j } \rho e^{ - i \tau H_j } = \sum_{j=1}^{n} p_j e^{ i \tau H_j } \rho e^{ - i \tau H_j } = \mathcal{E}(\rho)
\end{align}
Using Taylor series expansions and $p_j=h_j/\lambda$, we have that to leading order in $\tau$,
\begin{align}
\label{RandChannelRough}
\nonumber
\mathcal{E}(\rho) & = \rho +i \sum_{j}^{n} \frac{h_j \tau}{ \lambda} (H_j \rho - \rho H_j ) + O(\tau^2) .
\end{align}

Consider the channel $\mathcal{U}_N$, which is $\frac{1}{N}$ of the full dynamics of the Hamiltonian simulation:

\begin{align}
\nonumber
\mathcal{U}_N( \rho ) & = e^{ i t H/N } \rho e^{-i t H/N} \\ 
\nonumber
& = \rho + i \frac{t}{N}( H \rho - \rho H ) + O \left( \frac{t^2 }{ N^2 } \right) ,
\end{align}
where we have expanded out to the leading order in $t/N$. Using $H= \sum_j h_j H_j$, we find:

\begin{align}
\nonumber
\mathcal{U}_N( \rho ) & = \rho + i \sum_{j}^{L} \frac{t h_j}{N} ( H_j \rho - \rho H_j ) + O \left( \frac{t^2 }{ N^2 } \right) .
\end{align}

Comparing $\mathcal{E}$ and $\mathcal{U}_N$, we see that the zeroth and first-order terms match.
The higher-order terms typically do not match, and a more rigorous analysis (see \cite{campbell2019random} for the details) shows that the channels $\mathcal{E}$ and $\mathcal{U}_N$ differ by an amount bounded by
\begin{equation}
\label{deltaBound}
\nonumber
\delta \leq \frac{ 2 \lambda^2 t^2 }{ N^2 } e^{ 2 \lambda t / N } \approx \frac{ 2 \lambda^2 t^2 }{ N^2 } .
\end{equation}

Thus the error $\Delta$ of $N$ repetitions $\mathcal{E}_{1} \mathcal{E}_{2}\cdots\mathcal{E}_{N}=\mathcal{E}^{N}$ relative to the target unitary $U$ is
\begin{equation}
\nonumber
\Delta = N \delta \lesssim \frac{ 2 \lambda^2 t^2}{N} .
\end{equation}

\end{proof}

\subsection{Proof of Theorem~\ref{theorem:flow-constraint}}
\label{proof_lem:tranflow}

\textbf{[Preserving Stationary Distribution in the Flow Network]}
The transition matrix $\mathbf{P}_1$ extracted from a flow network mentioned above satisfies the second condition in Theorem~\ref{theorem:correctness} once the following equations hold:
\begin{equation*}
f(\edge{S,H^{prev}_i})=f(\edge{H^{next}_i,T})=\pi_i 
\end{equation*}
for all $1\leq i\leq n$.

\begin{proof}
Given $\pi=(\pi_1, \pi_2, \ldots, \pi_n)$ defined in Theorem~\ref{theorem:correctness}, the product of $\pi$ and the transition matrix $\mathbf{P}$ can be calculated as:
\begin{align}    
         (\mathbf{\pi} \mathbf{P})(i) &=\sum_{j} \mathbf{\pi_j} \mathbf{P}(j,i) \notag\\ 
        & =\sum_{j} \mathbf{\pi_j} \frac{f(\edge{H^{prev}_j,H^{next}_i})}{ f(\edge{S, H^{prev}_j})}  \notag\\ 
        & = \sum_{j} f(\edge{H^{prev}_j,H^{next}_i}) \notag\\ 
        & = f(\edge{H^{next}_i,T}) \notag\\ 
        & = \mathbf{\pi_i} \notag
\end{align}
which indicates that $\pi$ is the stationary distribution of $\mathbf{P}$.

A special case is that there exists $i$, s.t. $\pi_i>0.5$. 
In this case, a flow satisfies Equation~\eqref{TH5.1eq} does not exist.
This issue can be mitigated by further decomposing $H$ into $\sum_{j\ne i}h_jH_j+0.5\cdot h_iH_i+0.5\cdot h_iH_i$.
Then there exists no $\pi_i > 0.5$. 

\end{proof}

\subsection{Proof of Proposition~\ref{theorem:cost-is-expectation}}
\label{sec:proof-cost-is-expectation}

\textbf{[Cost is the Expectation of CNOT Gate Count]}
Given the MCFP framework in Fig.~\ref{fig:flowgraph-problem}, when the cost function $w(\edge{i,j})$ in the flow network can represent the number of CNOT gates required when transiting between the Hamiltonian terms $e^{i\lambda tH_i/N}$ and $e^{i\lambda tH_j/N}$,
the total cost $\mathcal{W}=\sum_{\edge{i,j}\in E}f(\edge{i,j})w(\edge{i,j})$ is the mathematical expectation number of CNOT gates between circuit snippets of $H^{prev}$ and $H^{next}$ if the circuit is compiled via $\mathbf{P_{gc}}$.

\begin{proof}
    
We denote the random variable $\#G$ as the CNOT gate count between $H^{prev}$ and $H^{next}$. To calculate the expectation, we could decompose into each case $\#G_{ij}$ of the transition (sampling) from $H^{prev}_i$ to $H^{next}_j$. Note that, based on the definition, there is always $\#G_{ij}=CNOT\_count(H^{prev}_i,H^{next}_j)$.

\begin{align}
    \mathrm{E}[\#G]&=\sum_{j}\Pr[H^{prev}=H^{prev}_i]\cdot\mathrm{E}[\#G_{ij}|H^{prev}=H^{prev}_i] \notag\\
    &=\sum_i\pi_i\sum_j\Pr[H^{next}=H^{next}_j|H^{prev}=H^{prev}_i]\cdot CNOT\_count(H^{prev}_i,H^{next}_j) \notag\\
    &=\sum_i\pi_i\cdot\sum_j\mathbf{P_{gc}}_{ij}\cdot CNOT\_count(H^{prev}_i,H^{next}_j)) \notag\\
    &=\sum_i\sum_j f(\edge{H^{prev}_i,H^{next}_j})\cdot CNOT\_count(H^{prev}_i,H^{next}_j)\notag \\
    &= \sum_i\sum_j f(\edge{H^{prev}_i,H^{next}_j})\cdot w(H^{prev}_i,H^{next}_j)\notag \\
    &= \sum_{\edge{i,j}\in E} f(\edge{i,j})\cdot w(\edge{i,j})
\end{align}

The last step holds since the cost defined for edges $\edge{S,H^{prev}}$ and $\edge{H^{next},T}$ is $0$.
\end{proof}

\end{document}